\newtheorem{theorem}{Theorem}
\newtheorem{claim}{Claim}
\newtheorem{corollary}{Corollary}
\newtheorem{definition}{Definition}
\newtheorem{observation}{Observation}
\newtheorem{lemma}{Lemma}
\newenvironment{proof}[1][Proof]{\noindent\textbf{#1.} }{\ \rule{0.5em}{0.5em}}
\newcommand{\bR}{{\mathbb{R}}}
\newcommand{\MMS}{\mathsf{MMS}}
\newcommand{\PROP}{\mathsf{PROP}}
\newcommand{\bX}{\mathbf{X}}
\DeclareMathOperator*{\argmin}{arg\,min}
\title{Fair Allocation of Indivisible Chores:\
Beyond Additive Costs}
\author{ Bo Li$^1$ \hspace{30pt} Fangxiao Wang$^1$ \hspace{30pt} Yu Zhou$^1$\\\small
$^1$Department of Computing, The Hong Kong Polytechnic University, Hong Kong, China\\\small
\texttt{comp-bo.li@polyu.edu.hk}, \\ \small \texttt{fangxiao.wang@connect.polyu.hk,
csyzhou@comp.polyu.edu.hk}\\ \small
}
\date{}
\begin{document}

\maketitle

\begin{abstract}
We study the maximin share (MMS) fair allocation of $m$ indivisible chores to $n$ agents who have costs for completing the assigned chores.
It is known that exact MMS fairness cannot be guaranteed, and so far the best-known approximation for additive cost functions is $\frac{13}{11}$ by Huang and Segal-Halevi [EC, 2023]; however, beyond additivity, very little is known. 
In this work, we first prove that no algorithm can ensure better than $\min\{n,\frac{\log m}{\log \log m}\}$-approximation if the cost functions are submodular. 
This result also shows a sharp contrast with the allocation of goods where constant approximations exist as shown by Barman and Krishnamurthy [TEAC, 2020] and Ghodsi et al. [AIJ, 2022]. 
We then prove that for subadditive costs, there always exists an allocation that is $\min\{n,\lceil\log m\rceil\}$-approximation, and thus the approximation ratio is asymptotically tight.
Besides multiplicative approximation, we also consider the ordinal relaxation, 1-out-of-$d$ MMS, which was recently proposed by Hosseini et al. [JAIR and AAMAS, 2022]. 
Our impossibility result implies that for any $d\ge 2$, a 1-out-of-$d$ MMS allocation may not exist.
Due to these hardness results for general subadditive costs, we turn to studying two specific subadditive costs, namely, bin packing and job scheduling. 
For both settings, we show that constant approximate allocations exist for both multiplicative and ordinal relaxations of MMS.
\end{abstract}

\newpage

\section{Introduction}\label{intro}

\subsection{Background and related research}
Although fair resource allocation has been widely studied in the past decade, the research is centered around additive functions \citep{DBLP:journals/corr/abs-2208-08782}. 
However, in many real-world scenarios, the functions are more complicated.
Particularly, the functions appear in machine learning and artificial intelligence (e.g, clustering \citep{DBLP:conf/nips/MirzasoleimanKSK13}, Sketches \citep{DBLP:journals/cacm/Cormode17}, Coresets \citep{DBLP:journals/jmlr/LoosliC07}, data distillation \citep{DBLP:conf/nips/SnelsonG05}) are often submodular, and we refer the readers to a recent survey that reviews some major problems within machine learning that have been touched by submodularity \citep{DBLP:journals/corr/abs-2202-00132}.
Therefore, in this work, we study the fair allocation problem when the agents have beyond additive cost functions.
The mainly studied solution concept is maximin share (MMS) fairness \citep{budish2011combinatorial}, which is traditionally defined for the allocation of goods as a relaxation of proportionality (PROP).
A PROP allocation requires that 
the utility of each agent is no smaller than the average utility when all items are allocated to her.
PROP is too demanding 
in the sense that such an allocation does not exist even when there is a single item and two agents.
Due to this strong impossibility result, 
the maximin share (MMS) of an agent is proposed to relax the average utility by the maximum utility the agent can guarantee herself if she is to partition the items into $n$ bundles but is to receive the least favorite bundle.

For the allocation of goods, although MMS significantly weakens the fairness requirement, it was first shown by \citet{DBLP:conf/aaai/KurokawaPW16,DBLP:journals/jacm/KurokawaPW18} that there are instances where no allocation is MMS fair for all agents.
Accordingly, designing (efficient) algorithms to compute approximately MMS fair allocations steps into the center of the field of algorithmic fair allocation.
\citet{DBLP:journals/jacm/KurokawaPW18} first proved that there exists a $\frac{2}{3}$-approximate MMS fair allocation for additive utilities, and then \citet{DBLP:journals/talg/AmanatidisMNS17} designed a polynomial time algorithm with the same approximation guarantee. 
Later, \citet{DBLP:journals/mor/GhodsiHSSY21} improved the approximation ratio to $\frac{3}{4}$, and \citet{DBLP:journals/ai/GargT21} further improved it to $\frac{3}{4} + o(1)$.
On the negative side, \citet{DBLP:conf/wine/FeigeST21} proved that no algorithm can ensure better than $\frac{39}{40}$ approximation. Beyond additive utilities, \citet{DBLP:journals/teco/BarmanK20} initiated the study of approximately MMS fair allocation with submodular utilities, and proved that a $0.21$-approximate MMS fair allocation can be computed by the round-robin algorithm.
\citet{DBLP:journals/ai/GhodsiHSSY22} improved the approximation ratio to $\frac{1}{3}$, and moreover, they gave constant and logarithmic approximation guarantees for XOS and subadditive utilities, respectively. 
The approximations for XOS and subadditive utilities are recently improved by \citet{DBLP:conf/aaai/SeddighinS22}.

For the parallel problem of chores, where agents need to spend costs on completing the assigned chores, less effort has been devoted. 
\citet{DBLP:conf/aaai/AzizRSW17} first proved that the round-robin algorithm ensures 2-approximation for additive costs.
\citet{DBLP:journals/teco/BarmanK20},  \citet{DBLP:conf/sigecom/HuangL21} and \citet{DBLP:journals/corr/abs-2302-04581} respectively improved the approximation ratio to $\frac{4}{3}$, $\frac{11}{9}$ and $\frac{13}{11}$.
Recently, \citet{DBLP:conf/wine/FeigeST21} proved that with additive costs, no algorithm can be better than $\frac{44}{43}$-approximate.
However, except a recent work that studies binary supermodular costs \citep{DBLP:journals/corr/abs-2302-11530}, very little is known beyond additivity.

  

Besides multiplicative approximations, we also consider the ordinal approximation of MMS, namely, $1$-out-of-$d$ MMS fairness which is recently studied in \citep{DBLP:conf/fat/BabaioffNT19,DBLP:conf/atal/HosseiniSS22,DBLP:journals/jair/HosseiniSS22},
and proportionality up to any item (PROPX) which is an alternative way to relax proportionality \citep{moulin2018fair, DBLP:conf/www/0037L022}.
For indivisible chores, \citet{DBLP:conf/atal/HosseiniSS22} and \citet{DBLP:conf/www/0037L022} respectively proved the existence of $1$-out-of-$\lceil \frac{3n}{4}\rceil$ MMS fair and exact PROPX allocations.
As far as we know, all the above works also assume additive costs.
More works related to this paper in the literature can be seen in Appendix \ref{sec:relatedworks}. 

\subsection{Main results}
In this work, we aim at understanding the extent to which MMS fairness can be satisfied when the cost functions are beyond additive.
In a sharp contrast to the allocation of goods, we first show that no algorithm can ensure better than $\min\{n, \frac{\log m}{\log \log m}\}$-approximation when the cost functions are submodular.\footnote{In this paper we use $\log(\cdot)$ to denote $\log_2(\cdot)$.}
Further, we show that for general subadditive cost functions, there always exists an allocation that is $\min\{n, \log m\}$-approximate MMS, and thus the approximation ratio is asymptotically tight.
Next, we consider the ordinal relaxation, 1-out-of-$d$ MMS. 
It is trivial that 1-out-of-1 MMS is satisfied no matter how the items are allocated, and somewhat surprisingly, our impossibility result implies that for any $d \ge 2$, there is an instance for which no allocation is 1-out-of-$d$ MMS.

\smallskip

\noindent{\bf Result 1}
For general subadditive cost functions, the asymptotically tight multiplicative approximation ratio of MMS is $\min\{n, \log m\}$.
Further, for any $d \ge 2$, a 1-out-of-$d$ MMS allocation may not exist. 

\smallskip

Result 1 combines Theorems \ref{thm:submodular:lower}, \ref{thm:submodular:upper} and Corollary \ref{cor:1-out-of-d:sub}.
The strong impossibility in Result 1 does not rule out the possibility of constant multiplicative or ordinal approximation of MMS fair allocations for specific subadditive costs.
For this reason, we turn to studying two concrete settings with subadditive costs.
The first setting encodes a bin packing problem, which has applications in many areas (e.g., semiconductor chip design, loading vehicles with weight capacity limits, and filling containers \citep{coffman2013bin}). 
In the first setting, the items have sizes which can be different to different agents.
The agents have bins that can be used to pack the items allocated to them with the goal of using as few bins as possible. 
The second setting encodes a job scheduling problem, which appears in many research areas, including data science, big data, high-performance computing, and cloud computing \citep{hartmann2022updated}.
In the second setting, the items are jobs that need to be processed by the agents.
Each job may be of different lengths to different agents and each agent controls a set of machines with possibly different speeds.
Upon receiving a set of jobs, an agent's cost is determined by the corresponding minimum completion time when processing the jobs using her own machines (i.e., makespan).  
As will be clear, job scheduling setting is more general than the additive cost setting. 
Besides, it uncovers new research directions for group-wise fairness. 

\smallskip

\noindent{\bf Result 2}
For the bin packing and job scheduling settings, a 1-out-of-$\lfloor \frac{n}{2}\rfloor$ MMS allocation and a 2-approximate MMS allocation always exist. 

Result 2 combines Theorems \ref{thm:bin:ordinal}, \ref{thm:job:ordinal} and Corollaries \ref{cor:bin:cardinal}, \ref{cor:job:cardinal}. 
Besides studying MMS fairness, in Appendix \ref{sec:PROP}, we also prove hardness results for two other relaxations of proportionality, i.e., PROP1 and PROPX.

\section{Preliminaries}
\label{preliminaries}
For any integer $k\ge 1$, let $[k] = \{1,\ldots,k\}$.
In a fair allocation instance $I = (N,M,\{v_i\}_{i\in N})$, there are $n$ agents denoted by $N=[n]$ and $m$ items denoted by $M=\{e_1, \ldots, e_m\}$.
Each agent $i\in N$ has a cost function over the items, $v_i:2^M \to \bR^+\cup \{0\}$.
Note that for simplicity, we abuse $v_i(\cdot)$ to denote a cost function. 
The items are chores, and particularly, upon receiving a set of items $S \subseteq M$, $v_i(S)$ represents the effort or cost agent $i$ needs to spend on completing the chores in $S$.
The cost functions are normalized and monotone, i.e., $v_i(\emptyset) = 0$ and $v_i(S_1) \le v_i(S_2)$ for any $S_1 \subseteq S_2 \subseteq M$.
Note that no bounded approximation can be achieved for general cost functions, and we provide one such example in Appendix \ref{subsec:impossibility}. 
Thus we restrict our attention to the following three classes.
A cost function $v_i$ is subadditive if for any $S_1,S_2 \subseteq M$, $v_i(S_1 \cup S_2) \le v_i(S_1) + v_i(S_2)$.
It is submodular if for any $S_1 \subseteq  S_2 \subseteq M$ and any $e\in M\setminus S_2$, $v_i(S_2 \cup \{e\}) - v_i(S_2) \le v_i(S_1 \cup \{e\} ) - v_i(S_1)$.
It is additive if for any $S \subseteq M$, $v_i(S) = \sum_{e \in S} v_i(\{e\})$.
It is widely known that any additive function is also submodular, and any submodular function is also subadditive.

An allocation $\mathbf{A}=(A_1,\dots,A_n)$ is an $n$-partition of the items where $A_i$ contains the items allocated to agent $i$ such that $A_i \cap A_j = \emptyset$ for any $i\neq j$ and $\bigcup_{i\in N} A_i = M$.
For any set $S$ and integer $d$, let $\Pi_d(S)$ be the set of all $d$-partitions of $S$.
The maximin share (MMS) of agent $i$ is 
\[
\MMS_i^n(I) = \min_{(X_1, \ldots, X_n) \in \Pi_n(M)} \max_{j \in [n]} v_i(X_j).
\]
Note that we may neglect $n$ and $I$ in $\MMS_i^n(I)$ when there is no ambiguity.
Note that the computation of MMS is NP-hard even when the costs are additive, which can be verified by a reduction from the Partition problem.
Given an $n$-partition of $M$, $\bX=(X_1,\dots,X_n)$, if $v_i(X_j) \le \MMS_i$ for any $j\in [n]$, then $\bX$ is called an {\em MMS defining partition} for agent $i$.
Note that the original definition of $\MMS_i$ for chores is defined with non-positive values, where the minimum value of the bundles is maximized. 
In this work, to simplify the notions, we choose to use non-negative numbers (representing costs), and thus the definition is equivalently changed to be the maximum cost of the bundles being minimized.
To be consistent with the literature, we still call it maximin share. 

\begin{definition}[$\alpha$-MMS]
\label{def:MMS}
An allocation $\mathbf{A} = (A_1,\ldots,A_n)$ is $\alpha$-approximate maximin share ($\alpha$-MMS) fair if $v_i(A_i) \le \alpha \cdot \MMS_i$ for all $i\in N$.
The allocation is MMS fair if $\alpha = 1$.
\end{definition}

Given the definition of MMS, 
for any agent $i$ with subadditive cost $v_i(\cdot)$, we have the following bounds for $\MMS_i$,
\begin{equation}\label{lem:MMS:bounds}
    \MMS_i \ge \max\big\{ \max_{e \in M} v_i(\{e\}), \frac{1}{n} \cdot v_i(M) \big\}.
\end{equation}

Following recent works
\citep{DBLP:conf/fat/BabaioffNT19,DBLP:journals/jair/HosseiniSS22,DBLP:conf/atal/HosseiniSS22}, we also consider the ordinal approximation of MMS, namely, $1$-out-of-$d$ MMS fairness.
Intuitively, MMS fairness can be regarded as 1-out-of-$n$ MMS (i.e., partitioning the items into $n$ bundles but receiving the largest bundle).
Since 1-out-of-$n$ MMS allocations may not exist, 
we can instead find a maximum integer $d \le n$ such that a 1-out-of-$d$ MMS allocation is guaranteed to exist. 
Given a $d$-partition of $M$, $\bX=(X_1,\dots,X_d)$, if $v_i(X_j) \le \MMS_i^d$ for any $j\in [d]$, then $\bX$ is called a {\em 1-out-of-$d$ MMS defining partition} for agent $i$.
An allocation $\mathbf{A}$ is {\em $1$-out-of-$d$ MMS} fair if for every agent $i\in N$, $v_i(A_i) \le \MMS_i^d$.
More generally, given any $\alpha\ge 1$, we have the bi-factor approximation, $\alpha$-approximate $1$-out-of-$d$ MMS, if $v_i(A_i) \le \alpha \cdot \MMS_i^d$ for every $i\in N$.
By the definition, we have the following simple observation. 

\begin{observation}
\label{ob:ordinal2cardinal}
    For any instance with subadditive costs, given any integer $1\le d \le n$, a 1-out-of-$d$ MMS allocation is $\lceil \frac{n}{d} \rceil$-MMS fair.
\end{observation}
\begin{proof}
To prove the observation, it suffices to show $\MMS_i^d \le \lceil \frac{n}{d} \rceil \cdot \MMS_i^n$ for any agent $i \in N$. 
Let $\bX = (X_1, \ldots, X_n)$ be an MMS defining partition for agent $i$, which satisfies $v_i(X_j) \le \MMS_i^n$ for every $j \in [n]$. 
Consider a $d$-partition $\bX' = (X'_1, \ldots, X'_d)$ built by evenly distributing the $n$ bundles in $\bX$ to the $d$ bundles in $\bX'$; that is, the number of bundles distributed to the bundles in $\bX'$ differs by at most one. 
Clearly, $\bX'$ satisfies $v_i(X'_j) \le \lceil \frac{n}{d} \rceil \cdot \MMS_i^n$ for every $j\in [d]$. 
By the definition of 1-out-of-$d$ MMS, it follows that
\[
    \MMS_i^d \le \max_{j \in [d]}v_i(X'_j) \le \lceil \frac{n}{d} \rceil \cdot \MMS_i^n,
\]
thus completing the proof. 
\end{proof}

\section{General subadditive cost setting}\label{submodular}
By Inequality \ref{lem:MMS:bounds}, if the costs are subadditive, allocating all items to a single agent ensures an approximation of~$n$, 
which is somehow the most unfair algorithm.  
Surprisingly, such an unfair algorithm achieves the optimal approximation ratio of MMS even if the costs are submodular. 

\begin{theorem}\label{thm:submodular:lower}
For any $n \ge 2$, there is an instance with submodular costs for which no allocation is better than $n$-MMS or $\frac{\log m}{\log \log m}$-MMS.
\end{theorem}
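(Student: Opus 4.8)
The plan is to exhibit a single explicit instance built on a product structure. I would take the item set to be $M=[n]^n$, the set of all $n$-tuples over $[n]$ (so $m=n^n$ items, which is fine since we only need existence of an instance). For agent $i$, define
\[
v_i(S)=\big|\{x_i : x\in S\}\big|,
\]
the number of distinct values appearing in the $i$-th coordinate among the tuples in $S$. This is a coverage function (each item/tuple $x$ ``covers'' the element $x_i$ of the universe $[n]$), hence it is normalized, monotone and submodular, so it is a legal valuation here. The whole point of using a coverage function rather than additive costs is its diminishing returns: many tuples sharing a coordinate value together cost the same as a single one, and this is exactly what blocks the balancing trick that yields constant approximations for additive chores.

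Next I would pin down the maximin shares. For the upper bound, consider for agent $i$ the partition $P^i=(P^i_1,\dots,P^i_n)$ with $P^i_j=\{x\in M: x_i=j\}$. Every block $P^i_j$ is constant in coordinate $i$, so $v_i(P^i_j)=1$, giving $\MMS_i\le 1$. For the matching lower bound, Lemma~\ref{lem:MMS:bounds} gives $\MMS_i\ge \max_{x\in M} v_i(\{x\})=1$. Hence $\MMS_i=1$ for every agent, and it remains to show that every allocation hands some agent a bundle of value $n=n\cdot\MMS_i$.

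The core step is a counting/pigeonhole dichotomy. Fix an arbitrary allocation $\bX=(X_1,\dots,X_n)$ and suppose toward a contradiction that $v_i(X_i)\le n-1$ for every $i$. Then for each $i$ there is a value $j_i\in[n]$ not appearing in the $i$-th coordinate of any tuple in $X_i$, i.e.\ $X_i\cap P^i_{j_i}=\emptyset$, equivalently $X_i\subseteq M\setminus P^i_{j_i}$. Taking the union over $i$ and using that $\bX$ partitions $M$,
\[
M=\bigcup_{i\in N}X_i\ \subseteq\ \bigcup_{i\in N}\big(M\setminus P^i_{j_i}\big)=M\setminus\bigcap_{i\in N}P^i_{j_i}.
\]
But $\bigcap_{i\in N}P^i_{j_i}$ is precisely the single tuple $(j_1,\dots,j_n)\in M$, so the right-hand side is a proper subset of $M$, a contradiction. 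Therefore some agent $i$ has $v_i(X_i)=n$, so $\bX$ cannot be $\alpha$-MMS for any $\alpha<n$.

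The only genuine design choice is the instance itself; once the product ground set is chosen, submodularity is immediate, both bounds on $\MMS_i$ are one-line arguments, and the dichotomy is routine. The part to be careful about is precisely why the ``one missing coordinate value per agent'' assumption is contradictory — this is where the product structure does all the work, since choosing one block from each $P^i$ always leaves the corresponding tuple covered by nobody. I do not expect any serious obstacle beyond getting this intersection bookkeeping right.
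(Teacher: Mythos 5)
Your construction is exactly the paper's: the item set $[n]^n$, the valuation counting distinct $i$-th coordinate values (the paper phrases it as the minimum number of covering planes $C_{il}=\{x:x_i=l\}$ needed, which is the same function), $\MMS_i=1$ via the partition into planes, and the contradiction that the tuple $(j_1,\dots,j_n)$ would be unallocated. The proof is correct and takes essentially the same approach; your coverage-function phrasing just makes submodularity slightly more immediate than the paper's marginal-gain argument.
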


\begin{proof}
For any fixed $n \ge 2$, we construct an instance that contains $n$ agents and $m = n^n$ items. 
By taking logarithm of $m = n^n$ twice, it is easy to obtain 
\[
n = \frac{\log m}{\log\log m - \log\log n} \ge \frac{\log m}{\log \log m}. 
\]
Thus, in the following, it suffices to show that no allocation can be better than $n$-MMS.
Let each item correspond to a point in an $n$-dimensional coordinate system, i.e.,
\[
M = \{(x_1, x_2, \ldots, x_n) \mid x_i \in [n] \text{ for all } i \in [n]\}.
\]
For each agent $i \in N$, we define $n$ covering planes $\{C_{il}\}_{l\in [n]}$ and for each $l \in [n]$,
\begin{equation}
    C_{il} = \{(x_1, x_2, \ldots, x_n) \mid x_i = l \text{ and } x_j \in [n]   \text{ for all } j \in [n] \setminus \{i\}\}.
\end{equation}
Note that $\{C_{il}\}_{l\in [n]}$ forms an exact cover of the points in $M$, i.e., $\bigcup_{l} C_{il} = M$ and $C_{il} \cap C_{iz} = \emptyset$ for all $l \neq z$.
For any set of items $S \subseteq M$, $v_i(S)$ equals the minimum number of planes in $\{C_{il}\}_{l\in [n]}$ that can cover $S$.
Therefore, $v_i(S) \in [n]$ for all $S$.
We first show $v_i(\cdot)$ is submodular for every $i$. 
For any $S\subseteq T \subseteq M$ and any $e\in M\setminus T$, if $e$ is not in the same covering plane as any point in $T$, $e$ is not in the same covering plane as any point in $S$, either.
Thus, $v_i(T\cup\{e\}) - v_i(T) = 1$ implies $v_i(S\cup\{e\}) - v_i(S) = 1$, and accordingly,
\[
v_i(T\cup\{e\}) - v_i(T) \le v_i(S\cup\{e\}) - v_i(S).
\]

Since $\{C_{il}\}_{l\in [n]}$ is an exact cover of $M$, $\MMS_i = 1$ for every $i$, where the MMS defining partition is simply $\{C_{il}\}_{l\in [n]}$.
Then to prove the theorem, it suffices to show that for any allocation of $M$, there is at least one agent whose cost is $n$.
For the sake of contradiction, we assume there is an allocation $\mathbf{A}=(A_1,\ldots,A_n)$ where every agent has cost at most $n-1$.
This means that for every $i \in N$, there exists a plane $C_{il_i}$ such that $A_i \cap C_{il_i} = \emptyset$.
Consider the point $\mathbf{b}=(l_1,\ldots,l_n)$, it is clear that $\mathbf{b} \in C_{il_i}$ and thus $\mathbf{b} \notin A_i$ for all $i$.
This means that $\mathbf{b}$ is not allocated to any agent, a contradiction.
Therefore, such an allocation $\mathbf{A}$ does not exist which completes the proof of the theorem. 
\end{proof}

To facilitate the understanding of Theorem \ref{thm:submodular:lower}, in Appendix \ref{subsec:subadd:example}, we visualize an instance with 3 agents and 27 items where no allocation is better than 3-MMS. 
The hard instance in Theorem \ref{thm:submodular:lower} also implies the following lower bound for $1$-out-of-$d$ MMS. 

\begin{corollary}
\label{cor:1-out-of-d:sub}
For any $2\le d\le n$, there is an instance with submodular cost functions for which no allocation is $1$-out-of-$d$ MMS.
\end{corollary}
\begin{proof}
We consider the same instance that is designed in Theorem \ref{thm:submodular:lower}.
In this instance, we have proved that no matter how the items are allocated among the agents, there is at least one agent, say $i$, whose cost is $n$.
Moreover, by the design of the cost functions, for any integer $d$, it can be observed that $\MMS_i^d = \lceil \frac{n}{d} \rceil$.
Note that $\lceil \frac{n}{d} \rceil$ is always smaller than $n$ for all $d\ge 2$, thus the allocation is not $1$-out-of-$d$ MMS to $i$.
\end{proof}

\begin{theorem}\label{thm:submodular:upper}
For any instance with subadditive cost functions, there always exists a $\min\{n, \lceil\log m\rceil\}$-MMS allocation.
\end{theorem}
\begin{proof}
We describe the algorithm that computes a $\min\{n,\lceil\log m\rceil\}$-MMS allocation in Algorithm \ref{alg:logm:upper}.
First, if $\log m \ge n$, we are safe to arbitrarily allocate the items to the agents, which ensures $n$-approximation.

\begin{algorithm}[htbp]
\caption{Computing a $\min\{n,\lceil\log m\rceil\}$-MMS allocation for subadditive costs}\label{alg:logm:upper}
\textbf{Input: }{An instance $(N, M, \{v_i\}_{i\in N})$ with general subadditive costs.}\\
 \textbf{Output: }{An allocation $\mathbf{A} = (A_1, \dots, A_n$) such that $v_i(A_i) \le \lceil\log m\rceil \cdot \MMS_i$ for all $ i\in N$. }
 
\begin{algorithmic}[1]
\STATE Initialize $A_i \leftarrow \emptyset$ for every $i\in N$. 
\IF{$\log m \ge n$}
\STATE $A_1 \leftarrow M$.
\ELSE \STATE$i \leftarrow 1$ and $M_0 \leftarrow M$.
\ENDIF
\WHILE{$M_{i-1}\neq \emptyset$ and $i \le n$}
\STATE Let $(D^i_1,\dots,D^i_n)$ be one of $i$'s MMS defining partitions over $M$. 
\STATE Let $R^i_j = D^i_j \cap M_{i-1}$ for all $j \in [n]$. Re-index the bundles such that $|R^i_1| \ge \cdots \ge |R^i_n|$.
\STATE $A_i \leftarrow \bigcup_{j\in [\lceil \log m \rceil]} R^i_j$ and $M_i \leftarrow M_{i-1} \setminus A_i$.\\
\STATE $i \leftarrow i+1$. 
\ENDWHILE
\end{algorithmic}
\end{algorithm}

The tricky case is when $\log m < n$, where we cannot allocate too many items to a single agent.
For this case, we first look at agent 1's MMS defining partition $\mathbf{D}^1 = (D^1_1,\dots,D^1_n)$, where $v_1(D^1_j) \le \MMS_1$ for all $j\in [n]$ and we assume that they are ordered by the number of items, i.e., $|D^1_1| \ge \cdots \ge |D^1_n|$. 
In order to ensure that agent 1's cost is no more than $\lceil \log m \rceil$ times her MMS, we ask her to take away $\lceil \log m \rceil$ largest bundles (in terms of number of items) in $\mathbf{D}^1$, i.e., $A_1 = \bigcup_{j\in [\lceil \log m \rceil]} D^1_j$. 
Since the cost function is subadditive, 
\[
v_1(A_1) \le \sum_{j \in [\lceil \log m \rceil]} v_1(D^1_j) \le \lceil \log m \rceil \cdot \MMS_1.
\]
Moreover, since on average each bundle in $\mathbf{D}^1$ contains $\frac{m}{n}$ items and $A_1$ contains the bundles with largest number of items, $|A_1| \ge \lceil \log m \rceil \cdot \frac{m}{n} \ge \frac{\log m}{n} \cdot m$.
That is, at least $\frac{\log m}{n}$ fraction of the items are taken away by agent 1.
Let $M_1 = M\setminus A_1$ be the set of remaining items, and we have
\[
|M_1| \le \left(1 - \frac{\log m}{n}\right) \cdot m.
\]

We next ask agent 2 to take away items in a similar way to agent 1.
Let $\mathbf{D}^2 = (D^2_1,\ldots,D^2_n)$ be one of agent 2's MMS defining partitions, and $\mathbf{R}^2 = (R^2_1,\ldots,R^2_n)$ be the remaining items in these bundles, i.e., $R^2_j = D^2_j \cap M_1$.
Again, we assume $|R^2_1|\ge \cdots\ge |R^2_n|$.
Letting $A_2 = \bigcup_{j\in [\lceil \log m \rceil]} R^2_j$ and $M_2 = M_1\setminus A_2$, we have $v_2(A_2) \le \lceil \log m \rceil\cdot \MMS_2$.
Moreover, since on average each bundle in $\mathbf{R}^2$ contains $\frac{|M_1|}{n}$ items and $A_2$ contains the bundles with largest number of items, 
\[
|A_2| \ge \lceil \log m \rceil \cdot \frac{|M_1|}{n} \ge \frac{\log m}{n} \cdot |M_1|,
\]
which gives
\begin{align}\label{eq:sub:up:1}
    |M_2| \le  \left(1 - \frac{\log m}{n}\right) \cdot |M_1| \le \left(1 - \frac{\log m}{n}\right)^2 \cdot m.
\end{align}
We continue with the above procedure for agents $i = 3, \dots, n$ with the formal description shown in Algorithm \ref{alg:logm:upper}.
It is straightforward that every agent $i$ who gets a bundle $A_i$ has cost at most $\lceil \log m \rceil \cdot \MMS_i$.
Further, by induction, Equation \ref{eq:sub:up:1} holds for all agents $i\le n$, i.e.,
\begin{align*}
    |M_i| \le  \left(1 - \frac{\log m}{n}\right) \cdot |M_{i-1}| \le \left(1 - \frac{\log m}{n}\right)^i \cdot m.
\end{align*}
To show the validity of the Algorithm, it remains to show that the algorithm can allocate all items, i.e., $M_n = \emptyset$.
This can be seen from the following inequalities,
\[
    |M_n|  \le \left(1 - \frac{\log m}{n}\right)^n \cdot m = \left(1 - \frac{\log m}{n}\right)^{\frac{n}{\log m} \cdot \log m} \cdot m < \left(\frac{1}{e}\right)^{\log m} \cdot m < \frac{1}{m}\cdot m =1.
\]
Since $|M_n|<1$, $M_n$ must be empty, which completes the proof of the theorem.
\end{proof}

Note that Theorem \ref{alg:logm:upper} does not rule out the possibility of beating the approximation ratio for specific subadditive costs. 
In the next two sections, we turn to studying two specific settings, where we are able to beat the lower bounds in Theorem \ref{thm:submodular:lower} and Corollary \ref{cor:1-out-of-d:sub} by designing algorithms that can guarantee constant ordinal and multiplicative approximations of MMS.
We will mostly consider the ordinal approximation of MMS. 
By Observation \ref{ob:ordinal2cardinal}, the ordinal approximation gives a result of the multiplicative one, which can be improved by slightly modifying the designed algorithms. 

\section{Bin packing setting}
\label{sec:bin}
\subsection{Model}
The first setting encodes a bin packing problem where the items have sizes and need to be packed into bins by the agents.
The items may be of different sizes to different agents. 
Specifically, each item $e_j \in M$ has size $s_{i,j} \ge 0$ to each agent $i \in N$. 
For a set of items $S$, $s_i(S)=\sum_{e_j\in S}s_{i,j}$.
Each agent $i\in N$ has unlimited number of bins with the same capacity $c_i$. 
Without loss of generality, we assume that $c_1 \ge \cdots \ge c_n$ and $c_i \ge \max_{e_j\in M} s_{i,j}$ for all $i \in N$. 

Upon receiving a set of items $S \subseteq M$, agent $i$'s cost $v_i(S)$\footnote{Note that although the value of $v_i(S)$ also depends on $c_i$, to simplify the notations, we let the subscript $i$ absorb $c_i$ and neglect an extra parameter in $v_i(\cdot)$. } is determined by the minimum number of bins (with capacity $c_i$) that can pack all items in $S$.
Note that the calculation of $v_i(S)$ involves solving a classic bin packing problem which is NP-hard. 
For any two sets $S_1$ and $S_2$, $v_i(S_1\cup S_2) \le v_i(S_1)+v_i(S_2)$ since the optimal packing of $S_1\cup S_2$ is no worse than packing $S_1$ and $S_2$ separately and thus 
$v_i(\cdot)$ is subadditive.
Accordingly, $\MMS_i^d$ is essentially the minimum number $k_i$ such that the items can be partitioned into $d$ bundles and the items in each bundle can be packed into no more than $k_i$ bins. 
The definition of $\MMS_i^d$ gives $\MMS_i^d\cdot c_i \ge \frac{s_i(M)}{d} \text{ for all } i\in N$. 

We say an item $e_j\in M$ is \textit{large} for an agent $i$ if the size of $e_j$ to $i$ exceeds half of the capacity of $i$'s bins, i.e., $s_{i,j} > \frac{c_i}{2}$; 
otherwise, we say $e_j$ is \textit{small} for $i$. 
Let $H_i$ denote the set of $i$'s large items in $M$, and $L_i$ denote the set of $i$'s small items; that is, $H_i= \{e_j\in M: s_{i, j} > \frac{c_i}{2}\}$ and $L_i = \{e_j\in M: s_{i, j} \le \frac{c_i}{2}\}$.
Since two large items cannot be put together into the same bin, the number of each agent $i$'s large items is at most $\MMS_i^d \cdot d$; that is, $|H_i| \le \MMS_i^d \cdot d$. 

We apply a widely-used reduction \citep{DBLP:journals/aamas/BouveretL16,DBLP:conf/sigecom/HuangL21} to restrict our attention on IDO instances where $s_{i, 1} \ge \cdots \ge s_{i, m}$ for all $i$. 
Specifically, it means that any algorithm that ensures $\alpha$-approximate 1-out-of-$d$ MMS allocations for IDO instances can be converted to compute $\alpha$-approximate 1-out-of-$d$ MMS allocations for general instances. 
The reduction may not work for all subadditive costs, but we prove in Appendix \ref{subsec:IDO} that it does work for the bin packing and job scheduling settings. 
Therefore, for these two settings, we only consider IDO instances. 

\subsection{Algorithm}\label{bin alg}
Next, we elaborate on the algorithm that proves Theorem \ref{thm:bin:ordinal}. 

\begin{theorem}
\label{thm:bin:ordinal}
    A $1$-out-of-$\lfloor \frac{n}{2} \rfloor$ MMS allocation always exists for any bin packing instance.
\end{theorem}

Let $d = \lfloor \frac{n}{2} \rfloor$. In a nutshell, our algorithm consists of two parts: in the first part, we partition the items into $d$ bundles in a bag-filling fashion and select one or two agents for each bundle. 
In the second part, for each of the $d$ bundles and each of the agents selected for it, we present an imaginary assignment of the items in the bundle to the bins of the agent. 
These imaginary assignments are used to guide the allocation of the items to the agents, such that each agent receives cost no more than her 1-out-of-$d$ MMS. 

\subsubsection{Part 1: partitioning the items into $d$ bundles}
\label{subsubsec:bin:part1}
The algorithm in the first part is formally presented in Algorithm \ref{modifiedBagFilling1}, which runs in $d$ rounds of bag initialization (Steps \ref{alg1:binInit:start} to \ref{alg1:binInit:end}) and bag filling (Steps \ref{alg1:binFill:start} to \ref{alg1:binFill:end}).
For each round $j\in [d]$, we define \textit{candidate agents} -  those who think the size of the bag $B_j$ is not large enough and have unallocated small items (Step \ref{alg1:candidate}).
Note that the set of candidate agents changes with the items in the bag and the unallocated items.
In the bag initialization procedure, we put into the bag the item $e_j$ and the items every $d$ items after $e_j$ (i.e., $e_{j+d}, e_{j+2d}, \ldots$), as long as they have not been allocated and are large for at least one remaining agent. 
We select one such agent. 
After the bag initialization procedure, if there is at most one candidate agent, the round ends and the candidate agent (if exists and has not been selected) is added as another selected agent. 
Otherwise, we enter the bag filling procedure. 

In the bag filling procedure, as long as there exist at least two candidate agents, we let two of them be the selected agents and put the smallest unallocated item into the bag. 
If there is at most one candidate agent after the smallest item is put into the bag, the round ends and the only candidate agent (if exists and has not been selected) replaces one of the selected agents. 

\begin{algorithm}[htbp]
\caption{Partitioning the items into $d$ bundles.} \label{modifiedBagFilling1}
\textbf{Input: }{
An IDO bin packing instance $(N, M, \{v_i\}_{i\in N}, \{s_{i}\}_{i\in N}$).}\\
\textbf{Output: }{A $d$-partition of the items $\mathbf{B} = (B_1, \dots, B_d)$ and disjoint sets of selected agents $\mathbf{G}=(G_1,\dots,G_d)$.} 

\begin{algorithmic}[1]
\STATE Initialize $L_i \leftarrow \{e_j\in M: s_{i, j} \le \frac{c_i}{2}\}$ for each $i\in N$, and $R \leftarrow M$.
\FOR{$j = 1$ to $d$}
    \STATE Initialize $B_j \leftarrow \emptyset$, $G_j \leftarrow \emptyset$, $t\leftarrow j$. 
    \STATE $N(B_j) \leftarrow \{i\in N: s_i(B_j) \le \frac{s_i(M)}{d} \text{ and } L_i \cap R \neq \emptyset\}$. \textcolor{gray}{// Candidate agents}\label{alg1:candidate}
    \WHILE{$e_t \in R$ and there exists an agent $i \in N$ who thinks $e_t$ is large \label{alg1:binInit:start}}
        \STATE $B_j \leftarrow B_j\cup \{e_t\}$, $R \leftarrow R \setminus \{e_t\}$, $t \leftarrow t + d$. 
        \STATE $G_j \leftarrow \{i\}$. 
    \ENDWHILE \label{alg1:binInit:end}
    \IF{$|N(B_j)| = 1$ and $N(B_j) \neq G_j$}
    \STATE Pick $i \in N(B_j)$, $G_j \leftarrow G_j \cup \{i\}$. 
    \ENDIF 
    \WHILE{$|N(B_j)| \ge 2$} \label{alg1:binFill:start}
        \STATE Pick $i_1, i_2 \in N(B_j)$, $G_j \leftarrow \{i_1, i_2\}$. 
        \STATE Pick the smallest item $e\in R$, $B_j \leftarrow B_j \cup \{e\}$, $R \leftarrow R \setminus \{e\}$.
        \IF{$|N(B_j)| = 1$ and $N(B_j) \nsubseteq G_j$}
            \STATE Pick $i \in N(B_j)$ and replace one arbitrary agent in $G_j$ with agent $i$. 
        \ENDIF
    \ENDWHILE \label{alg1:binFill:end}
    \STATE $N \leftarrow N \setminus G_j$. 
\ENDFOR
\end{algorithmic}
\end{algorithm}

The way we establish the bag and select the agents makes the following two important properties satisfied for every round $j\in [d]$. 
\begin{itemize}
\item \textbf{Property 1}: for each selected agent $i \in G_j$, there are at most $\MMS_i^d$ items in $B_j$ that are large for $i$. 
Besides, letting $e_j^*$ be the item lastly added to $B_j$, if $e_j^*$ is small for $i$, then $s_i(B_j \setminus \{e_j^*\})\leq \frac{s_i(M)}{d}$.
\item \textbf{Property 2}: for each remaining agent $i'$ (i.e., $i' \notin \bigcup_{l\in [j]}G_l$), either $s_{i'}(B_j) > \frac{s_{i'}(M)}{d}$ or no unallocated item is small for $i$ at the end of round $j$. 
Besides, no item in $\{e_j, e_{j+d}, \ldots\}$ that is large for $i'$ remains unallocated at the end of round $j$. 
\end{itemize}
\begin{proof}
For the first property, observe that large items are added into the bag only in the bag initialization procedure, where one out of every $d$ items is picked.  
Since there are at most $d\cdot \MMS_i^d$ large items for every agent $i$, the bag contains at most $\MMS_i^d$ of $i$'s large items. 
There are two cases where $e_j^*$ is small for an agent $i\in G_j$. 
First, $i$ is the only candidate agent after $e_j^*$ is added, for which case, we have $s_i(B_j) \le \frac{s_i(M)}{d}$. 
Second, $i$ is one of the two selected candidate agents before $e_j^*$ is added, for which case, we have $s_i(B_j \setminus \{e_j^*\}) \le \frac{s_i(M)}{d}$. 
In both cases, $s_i(B_j \setminus \{e_j^*\}) \le \frac{s_i(M)}{d}$ holds. 
The second property is quite direct by the algorithm, since there is no candidate agent outside $G_j$ at the end of round $j$ (i.e., $N(B_j) \setminus G_j = \emptyset$), and all unallocated large items in $\{e_j, e_{j+d}, \ldots\}$ are put into the bag in the bag initialization procedure. 
\end{proof}

Property 1 ensures that the items in each bundle $B_j\in \mathbf{B}$ can be allocated to the selected agents in $G_j$, such that each agent $i\in G_j$ can use no more than $\MMS_i^d$ bins to pack all the items allocated to her, which will be shown in the following part. 
Property 2 ensures the following claim. 

\begin{claim}
\label{clm:bin:allAllocated}
    All the items can be allocated in Algorithm \ref{modifiedBagFilling1}.
\end{claim}
\begin{proof}
Observe that when the last round begins, there are at least $n-(d-1)\cdot 2 \ge 2$ remaining agents. 
If all the unallocated items are large for some remaining agent, all of them are added into the bag during the bag initialization procedure of the last round and thus no item remains unallocated. 
Now consider the case where some unallocated item is small for any remaining agent. 
By Property 2, for any $j\in [d-1]$ and any remaining agent $i'$, we have $s_{i'}(B_j) > \frac{s_{i'}(M)}{d}$.
This gives that the total size of the unallocated items to $i'$ is smaller than $\frac{s_{i'}(M)}{d}$. 
Besides, after the bag initialization procedure of the last round, no large item remains and every remaining item is small for any remaining agent.  
Combining these two facts, we know that there are always at least 2 candidate agents and thus all small items can be allocated in the bag filling procedure, which completes the proof.  
\end{proof}

\subsubsection{Part 2: Allocating the items to the agents}
\label{subsubsec:bin:part2}
Next, we allocate the items in each bundle $B_j\in \mathbf{B}$ to the selected agents in $G_j$.
Let $i$ be any agent in $G_j$ and $B'_j = B_j\setminus \{e_j^*\}$ where $e_j^*$ is the item lastly added to $B_j$. 
We first imaginatively assign the items in $B'_j$ to $i$'s bins as illustrated by Figure \ref{fig:bsch}. 
We first put $i$'s large items in $B'_j$ into individual empty bins.
Then we greedily put into the bins the remaining small items in $B'_j$ in decreasing order of their sizes, as long as the total size of the assigned items does not exceed the bin's capacity. 
The first time when the total size exceeds the capacity, we move to the next bin and so on (if all the bins with large items are filled, we move to an empty bin).
We call the item lastly added to each bin that makes the total size exceed the capacity an \textit{extra item}.
Denote by $J_i(B'_j)$ the set of extra items and by $W_i(B'_j) =  B'_j \setminus J_i(B'_j)$ the other items in $B'_j$. 

 \begin{figure}[htbp]
    \centering
    \includegraphics[width=9.5cm]{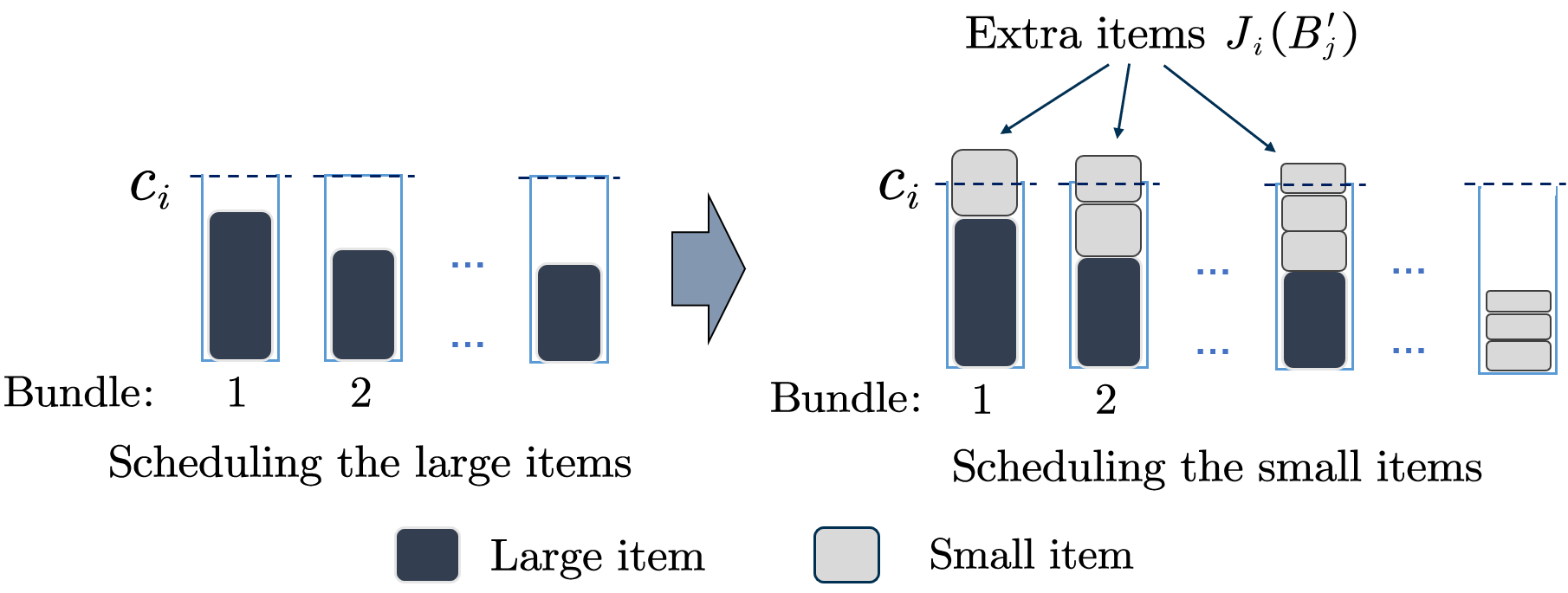}
    \caption{Imaginary assignment of $B'_j$ to agent $i$'s bins}
    \label{fig:bsch}
\end{figure}

If all items in $B_j$ are large for some agent $i \in G_j$, we allocate all of them to $i$.  
Otherwise, we know that round $j$ enters the bag filling procedure, thus there are two agents in $G_j$ and the last item $e_j^*$ is small for both of them. 
Letting $i_1$ be the agent who has more large items in $B_j$ and $i_2$ be the other agent, we allocate $i_1$ the items in $W_{i_1}(B'_j)$ and allocate $i_2$ the items in $J_{i_1}(B'_j) \cup \{e_j^*\}$. 
 
Now we are ready to prove Theorem \ref{thm:bin:ordinal}. 

\smallskip

\begin{proof}[Proof of Theorem \ref{thm:bin:ordinal}]
    Consider any round $j \in [d]$. 
    If all items in $B_j$ are large for some agent $i \in G_j$, by Property 1, we know that there are at most $\MMS_i^d$ items in $B_j$.
    Thus $i$ can pack all items in $B_j$ using no more than $\MMS_i^d$ bins. 
    
    For the other case, recall that the agent $i_1\in G_j$ who has more large items in $B_j$ receives the items in $W_{i_1}(B'_j)$, and the other agent $i_2$ receives the items in $J_{i_1}(B'_j) \cup \{e_j^*\}$. 
    We first discuss agent $i_1$.     
    By Property 1, we know that for each agent $i\in \{i_1, i_2\}$, there are at most $\MMS_{i}^d$ large items in $B'_j$ and $s_{i}(B'_j)\leq \frac{s_{i}(M)}{d}$.
    These two facts imply that in the imaginative assignment of $B'_j$ to $i_1$, no more than $\MMS_{i_1}^d$ bins are used. 
    Since otherwise, $s_{i_1}(B'_j) > \MMS_{i_1}^d\cdot c_{i_1} \ge \frac{s_{i_1}(M)}{d}$, a contradiction. 
    Therefore, $i_1$ can pack all items in $W_{i_1}(B'_j)$ using no more than $\MMS_{i_1}^d$ bins. 
    
    Next we discuss agent $i_2$. 
    Observe that in the imaginary assignment of $B'_j$ to $i_1$, for each extra item in $J_{i_1}(B'_j)$, there exists another item in the same bin with a larger size. 
    Therefore, we have $s_{i_2}(J_{i_1}(B'_j)) \le \frac{s_{i_2}(B'_j)}{2} \le \frac{s_{i_2}(M)}{2d}$. 
    Combining with the fact that there are at most $\MMS_{i_2}^d$ large items in $B'_j$ for $i_2$, we know that $i_2$ can use no more than $\MMS_{i_2}^d$ bins to pack all items in $J_{i_1}(B'_j)$ and there exists one bin with at least half the capacity not occupied. 
    Since otherwise, $s_{i_2}(J_{i_1}(B'_j)) > \MMS_{i_2}^d\cdot \frac{ c_{i_2}}{2} \ge \frac{s_{i_2}(M)}{2d}$, a contradiction. 
    Recall that the last item $e_j^*$ is small for $i_2$, it can be put into the bin that has enough unoccupied capacity. 
    Therefore, $i_2$ can also pack all items in $J_{i_1}(B'_j) \cup \{e_j^*\}$ using at most $\MMS_{i_2}^d$ bins, which completes the proof. 
\end{proof}

For the multiplicative relaxation of MMS, by Theorem \ref{thm:bin:ordinal} and Observation \ref{ob:ordinal2cardinal}, a $\lceil \frac{n}{\lfloor \frac{n}{2} \rfloor} \rceil$-MMS allocation is guaranteed. 
Actually, we can slightly modify Algorithm \ref{modifiedBagFilling1} to compute a 2-MMS allocation, which is better than $\lceil \frac{n}{\lfloor \frac{n}{2} \rfloor} \rceil$-MMS. 

\begin{corollary}
\label{cor:bin:cardinal}
    A 2-MMS allocation always exists for any bin packing instance.
\end{corollary}
\begin{proof}
To compute a 2-MMS allocation, we replace the value of $d$ with $n$ in Algorithm \ref{modifiedBagFilling1} and select only one agent in each round who receives the bag in that round.
The modified algorithm is formally presented in Algorithm \ref{alg:bin:2mms}. 
Following the same reasonings in Parts 1 and 2 (i.e., Subsubsections \ref{subsubsec:bin:part1} and \ref{subsubsec:bin:part2}), 
it is not hard to see that all items can be allocated in Algorithm \ref{alg:bin:2mms} and for any $i\in N$, there are at most $\MMS_i$ large items in $A_i$. 
Besides, if the last item $e_i^*$ is small for $i$, we have $s_i(A_i\setminus \{e_i^*\}) \le \frac{s_i(M)}{n}$. 
Again, in the imaginary assignment of $A_i \setminus \{e_i^*\}$ to $i$, no more than $\MMS_i$ bins are used and at least one of them does not have an extra item. 
Therefore, agent $i$ can use $\MMS_i$ bins to pack all items in $W_{i}(A_i \setminus \{e_i^*\})$ and another $\MMS_i$ bins to pack all items in $J_{i}(A_i \setminus \{e_i^*\}) \cup \{e_i^*\}$, which completes the proof. 
\end{proof}

\begin{algorithm}[tb]
\caption{Computing 2-MMS allocations for the bin packing setting}\label{alg:bin:2mms}
\textbf{Input: }{
An IDO bin packing instance $(N, M, \{v_i\}_{i\in N}, \{s_{i}\}_{i\in N}$).}\\
\textbf{Output: }{An allocation $\mathbf{A} = (A_1, \dots, A_n)$ such that $v_i(A_i) \le 2\cdot \MMS_i$ for all $i\in N$.} 

\begin{algorithmic}[1]
\STATE Initialize $R \leftarrow M$. \\
\FOR{$j=1$ to $n$}
    \STATE Initialize $B_j\leftarrow \emptyset$, $t\leftarrow j$, $k \leftarrow$ an arbitrary agent in $N$. \\
    \WHILE{$e_t \in R$ and there exists an agent $i\in N$ who thinks $e_t$ is large}
        \STATE $B_j\leftarrow B_j\cup \{e_t\}$, $R\leftarrow R\setminus\{e_t\}$, $t\leftarrow t+n$. \\ 
        \STATE $k \leftarrow i$. \\
    \ENDWHILE
    \WHILE{there exists an agent $i\in N$ that satisfies $s_{i}(B_j)\leq \frac{s_{i}(M)}{n}$ and $L_{i}\cap R\neq \emptyset$}
        \STATE $k\leftarrow i$. \\
        \STATE Pick the smallest item $e\in R$, $B_j\leftarrow B_j\cup \{e\}$, $R\leftarrow R\setminus \{e\}$.\\
    \ENDWHILE
\STATE $A_k\leftarrow B_j$, $N\leftarrow N\setminus \{k\}$.
\ENDFOR
\end{algorithmic}
\end{algorithm}

In Appendix \ref{subsec:bin:lowerboundI}, we show that the above multiplicative ratio is actually tight in the sense that there exists an instance where no allocation is better than 2-MMS. 
Besides, in Appendix \ref{subsec:bin:3/2}, we show that the algorithm that proves Corollary \ref{cor:bin:cardinal} actually computes an allocation where every agent $i$ can use at most $\frac{3}{2}\MMS_i+1$ bins to pack all the items allocated to her. 

\section{Job scheduling setting}
\subsection{Model}
The second setting encodes a job scheduling problem where the items are jobs that need to be processed by the agents. 
Each item $e_j\in M$ has a size $s_{i,j} \ge 0$ to each agent $i\in N$, and for a set of items $S \subseteq M$, $s_i(S)=\sum_{e_j\in S} s_{i, j}$.
As the bin packing setting, we only consider IDO instances where $s_{i, 1} \ge \cdots \ge s_{i, m}$ for all $i$. 
Each agent $i \in N$ exclusively controls a set of $k_i$ machines $P_i = [k_i]$ with possibly different speed $\rho_{i, j}$ for each $j \in P_i$. 
Without loss of generality, we assume $\rho_{i, 1} \ge \cdots \ge \rho_{i, k_i}$. 
Upon receiving a set of items $S \subseteq M$, agent $i$'s cost $v_i(S)$ is the minimum completion time of processing $S$ using her own machines $P_i$ (i.e., {\em the makespan of $P_i$}).
Formally, 
\[
v_i(S) = \min_{(T_1,\ldots,T_{k_i}) \in \Pi_{k_i}(S)} \max_{l\in [k_i]}\frac{\sum_{e_t\in T_l} s_{i,t}}{\rho_{i,l}}.
\]
Note that the computation of $v_i(S)$ is NP-hard if $k_i \ge 2$.
For any two sets $S_1$ and $S_2$, $v_i(S_1\cup S_2) \le v_i(S_1)+v_i(S_2)$ since the makespan of scheduling $S_1\cup S_2$ is no larger than the sum of the makespans of scheduling $S_1$ and $S_2$ separately, thus $v_i(\cdot)$ is subadditive.

Regarding the value of $\MMS_i^d$, intuitively, it is obtained by partitioning the items into $d\cdot k_i$ bundles, and allocating them to $k_i$ different types of machines (with possibly different speeds) where each type has $d$ identical machines so that the makespan is minimized.\footnote{An alternative way to explain the job scheduling setting is to view each agent $i$ as a group of $k_i$ small agents and $\MMS_i^d$ as the {\em collective maximin share} for these $k_i$ small agents.
We believe this notion of collective maximin share is of independent interest as a group-wise fairness notion.
We remark that this notion is different from the group-wise (and pair-wise) maximin share defined in \cite{DBLP:conf/aaai/BarmanBMN18} and \cite{DBLP:journals/teco/CaragiannisKMPS19}, where the max-min value is defined for each single agent.
In our definition, however, a set of agents share the same value for the items allocated to them. }
Note that when each agent controls a single machine, i.e., $k_i = 1$ for all $i$, the problem degenerates to the additive cost case, and thus the job scheduling setting strictly generalizes the additive setting.

\subsection{Algorithm}
\label{sec:job-alg} 
Before presenting our algorithm, we first show that simply partitioning the items into $n$ bundles in a round-robin fashion does not guarantee 1-out-of-$\lfloor \frac{n}{2} \rfloor$ MMS even for the simpler additive cost setting. 
Consider an instance where there are four identical agents and five items with costs 4, 1, 1, 1, 1, respectively. 
For this instance, the value of 1-out-of-$\lfloor \frac{n}{2} \rfloor$ MMS for each agent is 4 as the 1-out-of-$2$ MMS defining partition is $\{\{4\}, \{1, 1, 1, 1\}\}$. 
However, the round-robin algorithm allocates two items with costs 4 and 1 to one agent, who receives cost more than her 1-out-of-$d$ MMS. 
Our algorithm overcomes this problem by first partitioning the items into $\lfloor \frac{n}{2} \rfloor$ bundles and then carefully allocating the items in each bundle to two agents. 

Next, we elaborate on our algorithm that proves \ref{thm:job:ordinal}. 

\begin{theorem}
\label{thm:job:ordinal}
    A 1-out-of-$\lfloor \frac{n}{2} \rfloor$ MMS allocation always exists for any job scheduling instance. 
\end{theorem}

Let $d = \lfloor \frac{n}{2}\rfloor$. 
For each agent $i\in N$ and each machine $j \in P_i$, let $c_{i, j} = \rho_{i, j} \cdot \MMS_i^d$ denote $j$'s capacity. 
The capacity means that if the total size of a set of items to $i$ does not exceed $c_{i, j}$, the time it takes for $j$ to process the items does not exceed $\MMS_i^d$.
In a nutshell, our algorithm consists of three parts: in the first part, we partition all items into $d$ bundles in a round-robin fashion. 
In the second part, for each of the $d$ bundles and each agent, we present an imaginary assignment of the items in the bundle to the agent's machines. 
These imaginary assignments are used in the third part to guide the allocation of the items in each of the $d$ bundles to two agents, such that each agent can assign her allocated items to her machines in a way that the total workload on each machine does not exceed its capacity (in other words, each agent's cost is no more than her 1-out-of-$d$ MMS). 

\subsubsection{Part 1: partitioning the items into $d$ bundles}
We first partition the items into $d$ bundles $\mathbf{B} = (B_1, \ldots, B_d)$ in a round-robin fashion. 
Specifically, we allocate the items in descending order of their sizes to the bundles by turns, from the first bundle to the last one. 
Each time, we allocate one item to one bundle, and when every bundle receives an item, we start over from the first bundle and so on. 
For any set of items $S$, let $S[l]$ be the $l$-th largest item in $S$, then the algorithm is formally presented in Algorithm \ref{alg:job:roundrobin}. 

\label{subsec:job:algs}
\begin{algorithm}
\caption{Partitioning the items into $d$ bundles}
\label{alg:job:roundrobin}
\textbf{Input: }{An IDO job scheduling instance $(N, M, \{v_i\}_{i\in N}, \{s_{i}\}_{i\in N}$). }\\
\textbf{Output: }{A $d$-partition of $M$: $\mathbf{B} = (B_1,\ldots,B_{d})$.} 

\begin{algorithmic}[1]
\STATE Initialize $B_j \leftarrow \emptyset$ for every $j\in [d]$, and $r \leftarrow 1$. \\
\WHILE{$r \le m$}
    \FOR{$j=1$ to $d$}
        \IF{$r \le m$}
            \STATE $B_j \leftarrow B_j \cup \{M[r]\}$. \\
            \STATE $r \leftarrow r+1$. \\
        \ENDIF
    \ENDFOR
\ENDWHILE
\end{algorithmic}
\end{algorithm}

By the characteristic of the round-robin fashion, we have the following important observation. 
\begin{observation}
\label{ob:roundrobin}
    For each bundle $B_j \in \mathbf{B}$ and each item $e_k \in B_j \setminus \{B_j[1]\}$ (if exists), the $d-1$ items before $e_k$ (i.e., items $e_{k-1}, e_{k-2}, \ldots, e_{k-d+1}$) have at least the same sizes as $e_k$. 
\end{observation}

\subsubsection{Part 2: imaginary assignment}
Next, for each bundle $B_j \in \mathbf{B}$ computed in the first part and each agent $i \in N$, we imaginatively assign the items in $B_j \setminus B_j[1]$ to $i$'s machines as follows. 
We greedily assign the items with larger sizes to $i$'s machines with faster speeds (in other words, with larger capacities), as long as the total workload on one machine does not exceed the its capacity. 
The first time when the workload exceeds the capacity, we move to the next machine and so on. 
The algorithm is formally presented in Algorithm \ref{alg:job:amongmachine} and illustrated in Figure \ref{fig:sch}.
For each $l \in P_i$, $C^I_{i, l}$ contains the items imaginatively assigned to machine $l$ that do not make the total workload exceed $l$'s capacity, and $t_{i, l}$ is the last item assigned to $l$ that makes the total workload exceed the capacity. 
Note that $C_{i, l}^I$ may be empty and $t_{i, l}$ may be null. 
For simplicity, let $t_{i, 0} = B_j[1]$; that is, $B_j[1]$ is assigned to an imaginary machine 0. 
The items in $\bigcup_{l\in [k_i]}C^I_{i, l}$ are called \textit{internal items} (as shown by the \textit{dark} boxes in Figure \ref{fig:sch}), and $\{t_{i, 0}, \ldots, t_{i, k_i}\}$ are called \textit{external items} (as shown by the \textit{light} boxes). 

\begin{algorithm}[htbp]
\caption{Imaginary assignment}
\label{alg:job:amongmachine}
\textbf{Input: }{A bundle $B_j \in \mathbf{B}$ computed in the first part and an agent $i \in N$. }\\
\textbf{Output: }{Sets of internal items $\{C_{i, 1}^I, \ldots, C_{i, k_i}^I\}$ and external items $\{t_{i, 0}, \ldots, t_{i, k_i}\}$.} 

\begin{algorithmic}[1]
\STATE Initialize $C_{i, l}^I \leftarrow \emptyset$, $t_{i, l} \leftarrow \text{null}$ for every $l\in [k_i]$, and $r \leftarrow 1$. \\
\WHILE{$r \le |B_j|$}
    \FOR{$l=1$ to $k_i$}
        \STATE $t_{i, l-1} \leftarrow B_j[r]$, $r \leftarrow r+1$. \\
        \WHILE{$r \le |B_j|$ and $s_i(C^I_{i, l} \cup \{B_j[r]\}) \le c_{i, l}$}
            \STATE $C^I_{i, l} \leftarrow C^I_{i, l} \cup \{B_j[r]\}$, $r \leftarrow r+1$. 
        \ENDWHILE
    \ENDFOR
\ENDWHILE
\end{algorithmic}
\end{algorithm}

\begin{figure}[htbp]
    \centering
    \includegraphics[width=8cm]{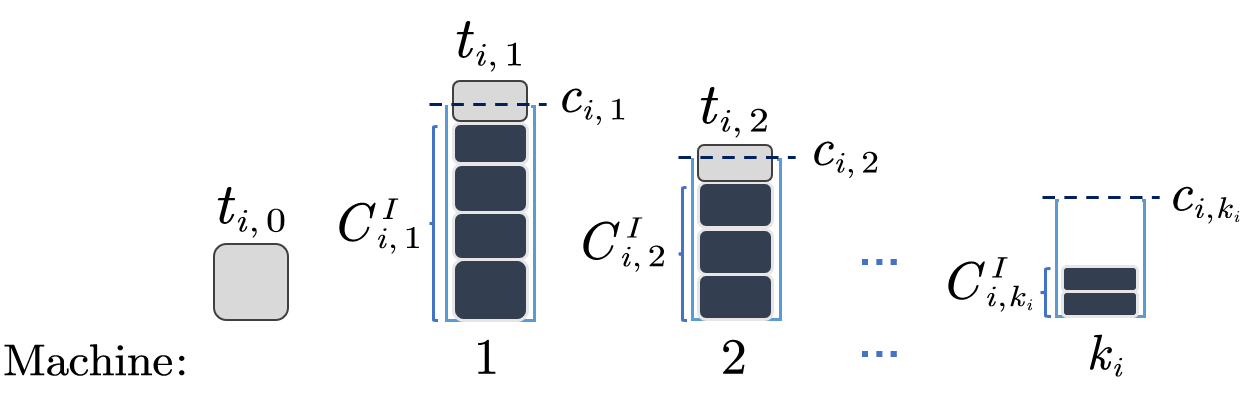}
    \caption{The imaginary assignment of $B_j$ to agent $i$}
    \label{fig:sch}
\end{figure}

For each bundle $B_j \in \mathbf{B}$ and each agent $i \in N$, the imaginary assignment has the following important properties. 
\begin{itemize}
    \item \textbf{Property 1}: all items in $B_j \setminus \{B_j[1]\}$ can be assigned to agent $i$'s machines. 
    Besides, the last machine $k_i$ does not have an external item; that is, $t_{i, k_i}$ is null. 
    \item \textbf{Property 2}: 
    for any 
    $1\le l \le k_i$, the total size of the internal items $C^I_{i, l}$ does not exceed the capacity of machine $l$, i.e., $s_i(C^I_{i, l}) \le c_{i, l}$
    \item \textbf{Property 3}: 
    for any 
    $1 \le l \le k_i$, the external item $t_{i, l-1}$ (if not null) has size no larger than the capacity of machine $l$, i.e., $s_i(t_{i, l-1}) \le c_{i, l}$.
\end{itemize} 
\begin{proof}
The first property holds since otherwise, $s_i(B_j \setminus \{B_j[1]\}) > \sum_{l\in [k_i]}c_{i, l}$. 
By Observation \ref{ob:roundrobin}, it follows that
\[
    s_i(M) > d \cdot s_i(B_j \setminus \{B_j[1]\}) > d \cdot \sum_{l\in [k_i]}c_{i, l}. 
\]
However, since all items can be assigned to $i$'s machines in $i$'s 1-out-of-$d$ MMS defining partition, we have $s_i(M) \le d \cdot \sum_{l\in [k_i]}c_{i, l}$, a contradiction. 

The second property directly follows the algorithm. 
For the third property, $s_i(t_{i, 0}) \le c_{i, 1}$ follows two facts that $t_{i, 0}$ is assigned to some machine in $i$'s 1-out-of-$d$ MMS defining partition and $c_{i, 1}$ is the largest capacity of the machines.  
We then consider $l \in [k_i-1]$ and show $s_i(t_{i, l}) \le c_{i, l+1}$ (if $t_{i, l}$ is not null).
The same reasoning can be applied to any other $l' \in [k_i-1]$. 
Let $S_1 = \bigcup_{p \in [l]}(C^I_{i, p} \cup \{t_{i, p}\})$. 
From the algorithm, we know that $s_i(S_1) > \sum_{p \in [l]}c_{i, p}$ and $t_{i, l}$ is the smallest item in $S_1$. 
By Observation \ref{ob:roundrobin}, there exist another $d-1$ disjoint sets of items $\{S_2, \ldots, S_d\}$ such that $s_i(S_k) \ge s_i(S_1)$ for every $k \in [2, d]$ and $t_{i, l}$ is also the smallest item in $\bigcup_{k \in [d]}S_k$.
Hence, $\sum_{k \in [d]}s_i(S_k) > d\cdot \sum_{p \in [l]}c_{i, p}$. 
This implies that in $i$'s 1-out-of-$d$ MMS defining partition, at least one item in $\bigcup_{k \in [d]}S_k$ is assigned to machine $p \ge l+1$. 
Combining with the fact that $t_{i, l}$ is the smallest item in $\bigcup_{k \in [d]}S_k$, we have $s_i(t_{i, l}) \le c_{i, l+1}$.    
\end{proof}

By these properties, for each machine $l \in P_i$, we can assign either the internal items $C^I_{i, l}$ or the external item $t_{i, l-1}$ to $l$, such that its completion time does not exceed $\MMS_i^d$. 
This intuition guides the allocation of the items to the agents in the following part. 

\subsubsection{Part 3: allocating the items to the agents}
Lastly, for any bundle $B_j \in B$, we arbitrarily choose two agents $i_1, i_2\in N$ and allocate them the items in $B_j$ as formally described in Algorithm \ref{alg:job:2agents}. 
Recall that in the imaginary assignment of $B_j$ to each agent $i\in \{i_1, i_2\}$, the items in $B_j$ are divided into internal items $\bigcup_{l\in [k_i]}C^I_{i, l}$ and external items $\{t_{i,0}, \ldots, t_{i, k_i}\}$. 
Let $E=\{e_1^*, \ldots, e_{|E|}^*\}$ contain all external items shared by $i_1$ and $i_2$.
Note that $e_1^* = t_{i_1, 0} = t_{i_2, 0}$. 
We allocate the items in $B_j$ to agents $i_1$ and $i_2$ in $|E|$ rounds. 
In each round $q\in [|E|]$, we first find the machines of $i_1$ and $i_2$ to which the shared external items $e_q^*$ and $e_{q+1}^*$ are assigned (denoted by $l_1$, $l_2$, $l'_1$ and $l'_2$, respectively. If $q = |E|$, simply let $l'_1 = k_{i_1}$ and $l'_2 = k_{i_2}$). 
We then find the agent $i_k\in \{i_1, i_2\}$ whose machine $l_k+1$ has more internal items. 
We allocate $i_k$ her internal items from machine $l_k+1$ to machine $l'_k$, and allocate the other agent $i_k$'s external items from machine $l_k$ to machine $l'_k-1$. 

\begin{algorithm}
\caption{Allocating the items to the agents} 
\label{alg:job:2agents}
\textbf{Input: }{A $d$-partition of the items $\mathbf{B} = (B_1, \ldots, B_d)$ returned by Algorithm \ref{alg:job:roundrobin}. }\\
\textbf{Output: }{An allocation $\mathbf{A} = (A_1, \dots, A_n)$ such that $v_i(A_i) \le \MMS_i^d$ for all $i\in N$.} 

\begin{algorithmic}[1]
    \STATE Initialize $A_i \leftarrow \emptyset$ for every $i \in N$. 
    \FOR{$j=1$ to $d$}
        \STATE Arbitrarily choose 2 agents $i_1, i_2 \in N$, $N\leftarrow N\setminus \{i_1,i_2\}$.
        \STATE $\{C_{i_1,1}^I,\ldots,C_{i_1,k_{i_1}}^I\}, \{t_{i_1, 0}, \ldots, t_{i_1, k_{i_1}}\} \leftarrow$ \textit{Algorithm}  \ref{alg:job:amongmachine}($B_j$, $i_1$). 
        \STATE $\{C_{i_2,1}^I,\ldots,C_{i_2,k_{i_2}}^I\}, \{t_{i_2, 0}, \ldots, t_{i_2, k_{i_2}}\} \leftarrow$ \textit{Algorithm}  \ref{alg:job:amongmachine}($B_j$, $i_2$). 
        \STATE $E \leftarrow \{t_{i_1, 0}, \ldots, t_{i_1, k_{i_1}}\} \cap \{t_{i_2, 0}, \ldots, t_{i_2, k_{i_2}}\}$. Re-label $E \leftarrow \{e_1^*, \ldots, e_{|E|}^*\}$. \textcolor{gray}{// Shared external items by $i_1$ and $i_2$}
        \FOR{$q = 1$ to $|E|$}
            \STATE Find $l_1\in [0, k_{i_1}]$ and $l_2\in [0, k_{i_2}]$ such that $e_q^* = t_{i_1,l_1} = t_{i_2,l_2}$. 
            \IF{$q < |E|$}
                \STATE Find $l'_1 \in [0, k_{i_1}] $ and $l'_2 \in [0, k_{i_2}]$ such that $e_{q+1}^* = t_{i_1,l'_1} = t_{i_2,l'_2}$.
            \ELSE
                \STATE $l'_1 = k_{i_1}$ and $l'_2 = k_{i_2}$. 
            \ENDIF
            \IF{$|C_{i_1,l_1+1}^I| \ge |C_{i_2,l_2+1}^I|$}
                \STATE $A_{i_1}\leftarrow \bigcup_{l=l_1+1}^{l'_1} C_{i_1,l}^I$, $A_{i_2}\leftarrow \bigcup_{l=l_1}^{l'_1-1} t_{i_1,l}$. 
            \ELSE
                \STATE $A_{i_2}\leftarrow \bigcup_{l=l_2+1}^{l'_2} C_{i_2,l}^I$, $A_{i_1}\leftarrow \bigcup_{l=l_2}^{l'_2-1} t_{i_2,l}$. 
            \ENDIF
        \ENDFOR
    \ENDFOR
\end{algorithmic}
\end{algorithm}

Since $2\cdot d = 2 \cdot \lfloor \frac{n}{2} \rfloor \le n$, no more than $n$ agents are needed to allocate all items. 
Thus to prove Theorem \ref{thm:job:ordinal}, it remains to show that each agent can assign her allocated items to her machines such that the total workload on each of the machines does not exceed its capacity. 

\begin{proof}[Proof of Theorem \ref{thm:job:ordinal}]
    Consider any bundle $B_j\in \mathbf{B}$ and assume the two chosen agents are $i_1, i_2\in N$.
    We first look at the first round of the process of  allocating the items in $B_j$ to $i_1$ and $i_2$. 
    Without loss of generality, assume that the first machine of $i_1$ contains more internal items than that of $i_2$, i.e., $C^I_{i_1, 1} \ge C^I_{i_2, 1}$. 
    From the algorithm, the items $i_1$ takes are $\bigcup_{l=1}^{l'_1}C_{i_1, l}^I$. 
    By the second property of the imaginary assignment, these items can be assigned to the first $l'_1$ machines of $i_1$ such that the total workload on each machine does not exceed its capacity. 
    Besides, the items $i_2$ takes are $\bigcup_{l=0}^{l'_1-1}t_{i_1, l}$, which are $e_1^*$ and a subset of $\bigcup_{l=2}^{l'_2}C_{i_2, l}^I$. 
    By the second and third properties of the imaginary assignment, these items can be assigned to the first $l'_2$ machines of $i_2$ such that the total workload on each machine does not exceed its capacity. 
    The same reasoning can be applied to all following rounds. 
    By induction, it follows that both $i_1$ and $i_2$ can assign their allocated items to their machines such that the total workload on each machine does not exceed its capacity. 
    This means that both $i_1$ and $i_2$ receive costs no more than their 1-out-of-$d$ MMS, which completes the proof. 
\end{proof}

For the multiplicative relaxation of MMS, by Theorem \ref{thm:job:ordinal} and Observation \ref{ob:ordinal2cardinal}, a $\lceil \frac{n}{\lfloor \frac{n}{2} \rfloor} \rceil$-MMS allocation is guaranteed. 
As the bin packing setting, after a slight modification, Algorithm \ref{alg:job:roundrobin} computes a 2-MMS allocation, which is better than $\lceil \frac{n}{\lfloor \frac{n}{2} \rfloor} \rceil$-MMS. 

\begin{corollary}
\label{cor:job:cardinal}
    A 2-MMS allocation always exists for any job scheduling instance. 
\end{corollary}
\begin{proof}
We show that by replacing the value of $d$ with $n$, Algorithm \ref{alg:job:roundrobin} computes a $2$-MMS allocation. 
Particularly, in the new version of Algorithm \ref{alg:job:roundrobin}, we partition the items in $M$ into $n$ bundles in a round-robin fashion and allocate each of the $n$ bundles to one agent in $N$. 
By the properties of the imaginary assignment, for each agent, the makespan of processing either the internal items or the external items in her bundle using her machines does not exceed $\MMS_i^n$. 
This implies that for each agent, the cost of her bundle does not exceed $2\cdot \MMS_i^n$, which completes the proof.
\end{proof}  

\section{Conclusion}
In this work, we study fair allocation of indivisible chores when the costs are beyond additive and the fairness is measured by MMS. 
There are many open problems and further directions. 
First, there are only existential results for $\min\{n, \lceil\log m\rceil\}$-MMS allocations in the general subadditive cost setting and 1-out-of-$\lfloor \frac{n}{2}\rfloor$ MMS allocations in the job scheduling setting. 
Polynomial-time algorithms that achieve the same results remain as open problems. 
Second, for the general subadditive cost setting and the bin packing setting, we provide the tight approximation ratios, but for the job scheduling setting, we only have a lower bound of $\frac{44}{43}$, which is inherited from the additive cost setting \citep{DBLP:conf/wine/FeigeST21}.
One immediate direction is to design better approximation algorithms or lower bound instances for the job scheduling setting.
Third, in the appendix, we show that for the bin packing setting, there exists an allocation where every agent's cost is no more than $\frac{3}{2}$ times her MMS plus 1.
We suspect that the multiplicative factor can be improved to 1.
Fourth, for the job scheduling setting, we restrict us on the case of related machines in the current work, it is interesting to consider the general model of unrelated machines.
As we have mentioned, the notion of collective maximin share fairness in the job scheduling setting can be viewed as a group-wise fairness notion, which could be of independent interest.
Finally, we can investigate other combinatorial costs that can better characterize real-world problems.

\newpage
\bibliographystyle{plainnat}
\bibliography{ref.bib}

\begin{thebibliography}{36}
\providecommand{\natexlab}[1]{#1}
\providecommand{\url}[1]{\texttt{#1}}
\expandafter\ifx\csname urlstyle\endcsname\relax
  \providecommand{\doi}[1]{doi: #1}\else
  \providecommand{\doi}{doi: \begingroup \urlstyle{rm}\Url}\fi

\bibitem[Amanatidis et~al.(2017)Amanatidis, Markakis, Nikzad, and
  Saberi]{DBLP:journals/talg/AmanatidisMNS17}
Georgios Amanatidis, Evangelos Markakis, Afshin Nikzad, and Amin Saberi.
\newblock Approximation algorithms for computing maximin share allocations.
\newblock \emph{{ACM} Trans. Algorithms}, 13\penalty0 (4):\penalty0
  52:1--52:28, 2017.

\bibitem[Amanatidis et~al.(2020)Amanatidis, Markakis, and
  Ntokos]{DBLP:journals/tcs/AmanatidisMN20}
Georgios Amanatidis, Evangelos Markakis, and Apostolos Ntokos.
\newblock Multiple birds with one stone: Beating 1/2 for {EFX} and {GMMS} via
  envy cycle elimination.
\newblock \emph{Theor. Comput. Sci.}, 841:\penalty0 94--109, 2020.

\bibitem[Amanatidis et~al.(2022)Amanatidis, Aziz, Birmpas, Filos{-}Ratsikas,
  Li, Moulin, Voudouris, and Wu]{DBLP:journals/corr/abs-2208-08782}
Georgios Amanatidis, Haris Aziz, Georgios Birmpas, Aris Filos{-}Ratsikas,
  Bo~Li, Herv{\'{e}} Moulin, Alexandros~A. Voudouris, and Xiaowei Wu.
\newblock Fair division of indivisible goods: {A} survey.
\newblock \emph{CoRR}, abs/2208.08782, 2022.

\bibitem[Aziz et~al.(2017)Aziz, Rauchecker, Schryen, and
  Walsh]{DBLP:conf/aaai/AzizRSW17}
Haris Aziz, Gerhard Rauchecker, Guido Schryen, and Toby Walsh.
\newblock Algorithms for max-min share fair allocation of indivisible chores.
\newblock In \emph{{AAAI}}, pages 335--341. {AAAI} Press, 2017.

\bibitem[Babaioff et~al.(2019)Babaioff, Nisan, and
  Talgam{-}Cohen]{DBLP:conf/fat/BabaioffNT19}
Moshe Babaioff, Noam Nisan, and Inbal Talgam{-}Cohen.
\newblock Fair allocation through competitive equilibrium from generic incomes.
\newblock In \emph{{FAT}}, page 180. {ACM}, 2019.

\bibitem[Barman and Krishnamurthy(2020)]{DBLP:journals/teco/BarmanK20}
Siddharth Barman and Sanath~Kumar Krishnamurthy.
\newblock Approximation algorithms for maximin fair division.
\newblock \emph{{ACM} Trans. Economics and Comput.}, 8\penalty0 (1):\penalty0
  5:1--5:28, 2020.

\bibitem[Barman et~al.(2018)Barman, Biswas, Murthy, and
  Narahari]{DBLP:conf/aaai/BarmanBMN18}
Siddharth Barman, Arpita Biswas, Sanath Kumar~Krishna Murthy, and Yadati
  Narahari.
\newblock Groupwise maximin fair allocation of indivisible goods.
\newblock In \emph{{AAAI}}, pages 917--924. {AAAI} Press, 2018.

\bibitem[Barman et~al.(2023)Barman, Narayan, and
  Verma]{DBLP:journals/corr/abs-2302-11530}
Siddharth Barman, Vishnu~V. Narayan, and Paritosh Verma.
\newblock Fair chore division under binary supermodular costs.
\newblock \emph{CoRR}, abs/2302.11530, 2023.

\bibitem[Bhaskar et~al.(2021)Bhaskar, Sricharan, and
  Vaish]{DBLP:conf/approx/BhaskarSV21}
Umang Bhaskar, A.~R. Sricharan, and Rohit Vaish.
\newblock On approximate envy-freeness for indivisible chores and mixed
  resources.
\newblock In \emph{{APPROX-RANDOM}}, volume 207 of \emph{LIPIcs}, pages
  1:1--1:23. Schloss Dagstuhl - Leibniz-Zentrum f{\"{u}}r Informatik, 2021.

\bibitem[Bilmes(2022)]{DBLP:journals/corr/abs-2202-00132}
Jeff~A. Bilmes.
\newblock Submodularity in machine learning and artificial intelligence.
\newblock \emph{CoRR}, abs/2202.00132, 2022.

\bibitem[Bouveret and Lema{\^{\i}}tre(2016)]{DBLP:journals/aamas/BouveretL16}
Sylvain Bouveret and Michel Lema{\^{\i}}tre.
\newblock Characterizing conflicts in fair division of indivisible goods using
  a scale of criteria.
\newblock \emph{Auton. Agents Multi Agent Syst.}, 30\penalty0 (2):\penalty0
  259--290, 2016.

\bibitem[Budish(2011)]{budish2011combinatorial}
Eric Budish.
\newblock The combinatorial assignment problem: Approximate competitive
  equilibrium from equal incomes.
\newblock \emph{Journal of Political Economy}, 119\penalty0 (6):\penalty0
  1061--1103, 2011.

\bibitem[Caragiannis et~al.(2019)Caragiannis, Kurokawa, Moulin, Procaccia,
  Shah, and Wang]{DBLP:journals/teco/CaragiannisKMPS19}
Ioannis Caragiannis, David Kurokawa, Herv{\'{e}} Moulin, Ariel~D. Procaccia,
  Nisarg Shah, and Junxing Wang.
\newblock The unreasonable fairness of maximum nash welfare.
\newblock \emph{{ACM} Trans. Economics and Comput.}, 7\penalty0 (3):\penalty0
  12:1--12:32, 2019.

\bibitem[Chan et~al.(2019)Chan, Chen, Li, and Wu]{DBLP:conf/ijcai/Chan00W19}
Hau Chan, Jing Chen, Bo~Li, and Xiaowei Wu.
\newblock Maximin-aware allocations of indivisible goods.
\newblock In \emph{{IJCAI}}, pages 137--143. ijcai.org, 2019.

\bibitem[Coffman et~al.(2013)Coffman, Csirik, Galambos, Martello, and
  Vigo]{coffman2013bin}
Edward~G Coffman, J{\'a}nos Csirik, G{\'a}bor Galambos, Silvano Martello, and
  Daniele Vigo.
\newblock Bin packing approximation algorithms: survey and classification.
\newblock In \emph{Handbook of combinatorial optimization}, pages 455--531.
  2013.

\bibitem[Cormode(2017)]{DBLP:journals/cacm/Cormode17}
Graham Cormode.
\newblock Data sketching.
\newblock \emph{Commun. {ACM}}, 60\penalty0 (9):\penalty0 48--55, 2017.

\bibitem[Feige et~al.(2021)Feige, Sapir, and Tauber]{DBLP:conf/wine/FeigeST21}
Uriel Feige, Ariel Sapir, and Laliv Tauber.
\newblock A tight negative example for {MMS} fair allocations.
\newblock In \emph{{WINE}}, volume 13112 of \emph{Lecture Notes in Computer
  Science}, pages 355--372. Springer, 2021.

\bibitem[Garg and Taki(2021)]{DBLP:journals/ai/GargT21}
Jugal Garg and Setareh Taki.
\newblock An improved approximation algorithm for maximin shares.
\newblock \emph{Artif. Intell.}, 300:\penalty0 103547, 2021.

\bibitem[Ghodsi et~al.(2021)Ghodsi, Hajiaghayi, Seddighin, Seddighin, and
  Yami]{DBLP:journals/mor/GhodsiHSSY21}
Mohammad Ghodsi, Mohammad~Taghi Hajiaghayi, Masoud Seddighin, Saeed Seddighin,
  and Hadi Yami.
\newblock Fair allocation of indivisible goods: Improvement.
\newblock \emph{Math. Oper. Res.}, 46\penalty0 (3):\penalty0 1038--1053, 2021.

\bibitem[Ghodsi et~al.(2022)Ghodsi, Hajiaghayi, Seddighin, Seddighin, and
  Yami]{DBLP:journals/ai/GhodsiHSSY22}
Mohammad Ghodsi, Mohammad~Taghi Hajiaghayi, Masoud Seddighin, Saeed Seddighin,
  and Hadi Yami.
\newblock Fair allocation of indivisible goods: Beyond additive valuations.
\newblock \emph{Artif. Intell.}, 303:\penalty0 103633, 2022.

\bibitem[Hartmann and Briskorn(2022)]{hartmann2022updated}
S{\"o}nke Hartmann and Dirk Briskorn.
\newblock An updated survey of variants and extensions of the
  resource-constrained project scheduling problem.
\newblock \emph{European Journal of operational research}, 297\penalty0
  (1):\penalty0 1--14, 2022.

\bibitem[Hosseini et~al.(2022{\natexlab{a}})Hosseini, Searns, and
  Segal{-}Halevi]{DBLP:conf/atal/HosseiniSS22}
Hadi Hosseini, Andrew Searns, and Erel Segal{-}Halevi.
\newblock Ordinal maximin share approximation for chores.
\newblock In \emph{{AAMAS}}, pages 597--605. International Foundation for
  Autonomous Agents and Multiagent Systems {(IFAAMAS)}, 2022{\natexlab{a}}.

\bibitem[Hosseini et~al.(2022{\natexlab{b}})Hosseini, Searns, and
  Segal{-}Halevi]{DBLP:journals/jair/HosseiniSS22}
Hadi Hosseini, Andrew Searns, and Erel Segal{-}Halevi.
\newblock Ordinal maximin share approximation for goods.
\newblock \emph{J. Artif. Intell. Res.}, 74, 2022{\natexlab{b}}.

\bibitem[Huang and Lu(2021)]{DBLP:conf/sigecom/HuangL21}
Xin Huang and Pinyan Lu.
\newblock An algorithmic framework for approximating maximin share allocation
  of chores.
\newblock In \emph{{EC}}, pages 630--631. {ACM}, 2021.

\bibitem[Huang and Segal{-}Halevi(2023)]{DBLP:journals/corr/abs-2302-04581}
Xin Huang and Erel Segal{-}Halevi.
\newblock A reduction from chores allocation to job scheduling.
\newblock \emph{CoRR}, abs/2302.04581, 2023.

\bibitem[Kurokawa et~al.(2016)Kurokawa, Procaccia, and
  Wang]{DBLP:conf/aaai/KurokawaPW16}
David Kurokawa, Ariel~D. Procaccia, and Junxing Wang.
\newblock When can the maximin share guarantee be guaranteed?
\newblock In \emph{{AAAI}}, pages 523--529. {AAAI} Press, 2016.

\bibitem[Kurokawa et~al.(2018)Kurokawa, Procaccia, and
  Wang]{DBLP:journals/jacm/KurokawaPW18}
David Kurokawa, Ariel~D. Procaccia, and Junxing Wang.
\newblock Fair enough: Guaranteeing approximate maximin shares.
\newblock \emph{J. {ACM}}, 65\penalty0 (2):\penalty0 8:1--8:27, 2018.

\bibitem[Li et~al.(2022)Li, Li, and Wu]{DBLP:conf/www/0037L022}
Bo~Li, Yingkai Li, and Xiaowei Wu.
\newblock Almost (weighted) proportional allocations for indivisible chores.
\newblock In \emph{{WWW}}, pages 122--131. {ACM}, 2022.

\bibitem[Lipton et~al.(2004)Lipton, Markakis, Mossel, and
  Saberi]{DBLP:conf/sigecom/LiptonMMS04}
Richard~J. Lipton, Evangelos Markakis, Elchanan Mossel, and Amin Saberi.
\newblock On approximately fair allocations of indivisible goods.
\newblock In \emph{{EC}}, pages 125--131. {ACM}, 2004.

\bibitem[Loosli and Canu(2007)]{DBLP:journals/jmlr/LoosliC07}
Ga{\"{e}}lle Loosli and St{\'{e}}phane Canu.
\newblock Comments on the "core vector machines: Fast {SVM} training on very
  large data sets".
\newblock \emph{J. Mach. Learn. Res.}, 8:\penalty0 291--301, 2007.

\bibitem[Mirzasoleiman et~al.(2013)Mirzasoleiman, Karbasi, Sarkar, and
  Krause]{DBLP:conf/nips/MirzasoleimanKSK13}
Baharan Mirzasoleiman, Amin Karbasi, Rik Sarkar, and Andreas Krause.
\newblock Distributed submodular maximization: Identifying representative
  elements in massive data.
\newblock In \emph{{NIPS}}, pages 2049--2057, 2013.

\bibitem[Moulin(2018)]{moulin2018fair}
Herv{\'e} Moulin.
\newblock Fair division in the age of internet.
\newblock \emph{Annu. Rev. Econ.}, 2018.

\bibitem[Plaut and Roughgarden(2020)]{DBLP:journals/siamdm/PlautR20}
Benjamin Plaut and Tim Roughgarden.
\newblock Almost envy-freeness with general valuations.
\newblock \emph{{SIAM} J. Discret. Math.}, 34\penalty0 (2):\penalty0
  1039--1068, 2020.

\bibitem[Seddighin and Seddighin(2022)]{DBLP:conf/aaai/SeddighinS22}
Masoud Seddighin and Saeed Seddighin.
\newblock Improved maximin guarantees for subadditive and fractionally
  subadditive fair allocation problem.
\newblock In \emph{{AAAI}}, pages 5183--5190. {AAAI} Press, 2022.

\bibitem[Snelson and Ghahramani(2005)]{DBLP:conf/nips/SnelsonG05}
Edward~Lloyd Snelson and Zoubin Ghahramani.
\newblock Sparse gaussian processes using pseudo-inputs.
\newblock In \emph{{NIPS}}, pages 1257--1264, 2005.

\bibitem[Zhou and Wu(2021)]{DBLP:journals/corr/abs-2109-07313}
Shengwei Zhou and Xiaowei Wu.
\newblock Approximately {EFX} allocations for indivisible chores.
\newblock \emph{CoRR}, abs/2109.07313, 2021.

\end{thebibliography}

\newpage
\appendix
\section*{Appendix}
\section{More related works}
\label{sec:relatedworks}
Besides proportionality, in another parallel line of research, envy-freeness and its relaxations,
namely envy-free up to one item (EF1) and envy-free up to any item (EFX), are also widely studied.
It was shown in \cite{DBLP:conf/sigecom/LiptonMMS04} and \cite{DBLP:conf/approx/BhaskarSV21} for goods and chores, respectively, that an EF1 allocation exists for the monotone combinatorial functions.
However, the existence of EFX allocations is still unknown even with additive functions.
Therefore, approximation algorithms were proposed in \cite{DBLP:journals/tcs/AmanatidisMN20,DBLP:journals/corr/abs-2109-07313} for additive functions and in \cite{DBLP:journals/siamdm/PlautR20,DBLP:conf/ijcai/Chan00W19} for subadditive functions.  
We refer the readers to \cite{DBLP:journals/corr/abs-2208-08782} for a detailed survey on fair allocation of indivisible items.

\section{Impossibility result for general cost functions}
\label{subsec:impossibility}
We provide an example to show that no bounded approximation ratio can be achieved for general cost functions. 
Note that there exist simpler examples, but we choose the following one because it represents a particular combinatorial structure -- minimum spanning tree.
Let $G=(V,E)$ be a graph shown in the left sub-figure of Figure \ref{fig:mst}, where the vertices $V$ are the items that are to be allocated, i.e., $M=V$.
There are two agents $N=\{1,2\}$ who have different weights on the edges as shown in the middle and right sub-figures of Figure \ref{fig:mst}.
The cost functions are measured by the minimum spanning tree in their received subgraphs. 
Particularly, for any $S\subseteq V$, $v_i(S)$ equals the weight of the minimum spanning tree on $G[S]$ -- the induced subgraph of $S$ in $G$ -- under agent $i$'s weights.
Thus, $\MMS_i = 0$, for both $i=1,2$, where an MMS defining partition for agent $1$ is $\{v_1,v_2\}$ and $\{v_3,v_4\}$ and that for agent $2$ is $\{v_1,v_4\}$ and $\{v_2,v_3\}$.
However, it can be verified that no matter how the vertices are allocated to the agents, there is one agent whose cost is at least~1, which implies that no bounded approximation is possible for general costs.

\begin{figure}[htbp]
    \centering
    \includegraphics[width=9cm]{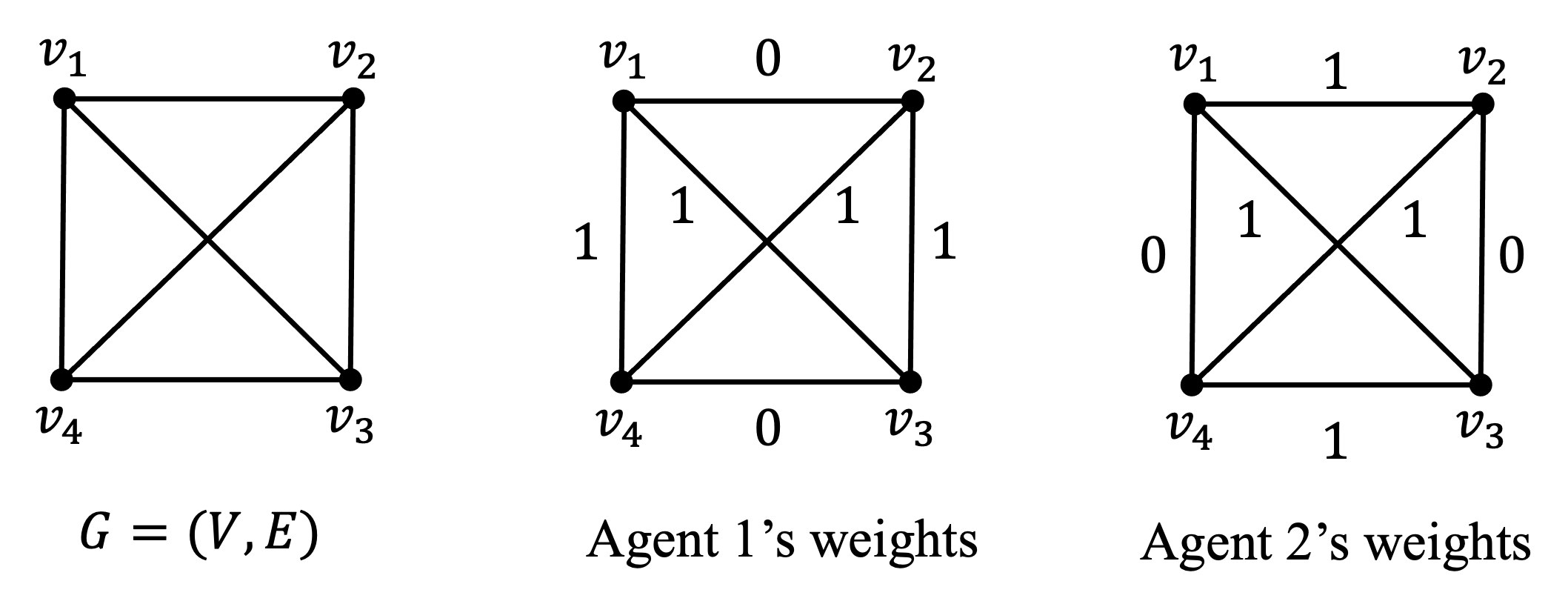}
    \caption{An instance with unbounded approximation ratio}
    \label{fig:mst}
\end{figure}

\section{An example that helps understand Theorem \ref{cor:1-out-of-d:sub}}
\label{subsec:subadd:example}
The example is illustrated in Figure \ref{fig:sbe} where each agent has three covering planes.
Take agent 1 for example, her three covering planes contain the items whose $x$ coordinates are 1, 2, 3, respectively. 
If there exists an allocation that is better than 3-MMS, then each agent is allocated items from at most 2 of her covering planes.
Without loss of generality, we assume that agent 1 (or agents 2 and 3 respectively) is not allocated any item whose $x$ (or $y$ and $z$ respectively) coordinate is 1.
Then, the item $(1, 1, 1)$ is not allocated to any agent, a contradiction. 

\begin{figure*} 
    \centering
    \includegraphics[width=10cm]{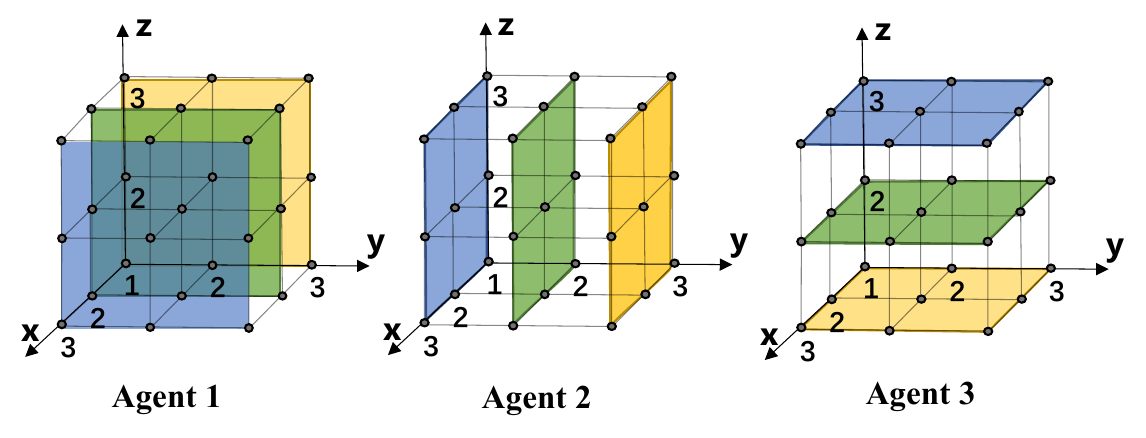}
    \caption{An instance with 3 agents and 27 items}
    \label{fig:sbe}
\end{figure*}

\section{Missing materials in Section \ref{sec:bin}}
\subsection{The IDO reduction}
\label{subsec:IDO}
For a bin packing or job scheduling instance $I$, the IDO instance $I'$ is constructed by setting the size of each item $e_j\in M$ to each agent $i\in N$ in $I'$ to the $j$-th largest size of the items to $i$ in $I$. 
Then the IDO reduction is formally presented in the following lemma.
\begin{lemma}\label{lem:IDO}
For the bin packing or job scheduling setting, if there exists an allocation $\mathbf{A}' = (A_1', \dots, A_n')$ in the IDO instance $I'$ such that $v_i'(A_i') \le \alpha \cdot \MMS_i^d(I')$ for all $i\in N$, then there exists an allocation $\mathbf{A} = (A_1, \dots, A_n$) in the original instance $I$ such that $v_i(A_i) \le \alpha \cdot \MMS_i^d(I)$ for all $i\in N$.
\end{lemma}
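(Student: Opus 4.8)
The plan is to reduce the general instance to the IDO instance by exhibiting, for each agent, a size-preserving correspondence between items, and then transfer the allocation $\mathcal{A}'$ back ``one agent at a time'' while maintaining that no agent's bin-packing cost increases. First I would record the two facts that make the bin-packing valuation amenable to this: (i) $\MMS_i$ is the same in $I$ and $I'$, because the permutations $\sigma_i$ only relabel items and the definition of $\MMS_i$ quantifies over all $n$-partitions of $M$, so the multiset of sizes seen by agent $i$ is unchanged; and (ii) for a fixed agent $i$, if $S \subseteq M$ and $T \subseteq M$ are such that the multiset $\{s_{i,j} : j \in T\}$ is dominated by $\{s'_{i,j} : j \in S\}$ in the sense that there is an injection $\phi: T \to S$ with $s_{i,\phi(j)} \ge s_{i,j}$ (equivalently $s'_{i,\cdot}$), then $v_i(T) \le v_i(S)$, since any packing of $S$ into $k_i$ bins yields a packing of $T$ into $k_i$ bins by replacing each item by the (weakly smaller) item it is matched to. This monotonicity-under-domination is the workhorse.

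Next I would set up the transfer. Think of the IDO instance as having items $1, \dots, m$ where item $j$ has size $s'_{i,j}$ to agent $i$, i.e.\ the $j$-th largest original size for agent $i$. I process agents in an arbitrary order, say $1, 2, \dots, n$. Maintaining a pool $R$ of yet-unassigned original items (initially $R = M$), when I reach agent $i$ I want to hand her a set $A_i \subseteq R$ with $|A_i| = |A_i'|$ such that the sizes $\{s_{i,j} : j \in A_i\}$ are each no larger than the corresponding sizes in $A_i'$. The natural choice: among the items currently in $R$, let $A_i$ be the $|A_i'|$ items that are \emph{cheapest for agent $i$} --- wait, that is too greedy and may starve later agents. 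The correct rule, which is the standard argument in the additive case, is slightly more careful: one shows by a global counting/exchange argument that the items of $A_i'$ (which are positions in the sorted order) can be matched to still-available original items of weakly smaller size for agent $i$, \emph{simultaneously consistently across all agents}. Concretely, since $\mathcal{A}'$ partitions $\{1,\dots,m\}$, I can define $A_i$ directly: process positions $1,2,\dots,m$ in increasing order, and when position $p$ belongs to $A_j'$ in the IDO allocation, assign to agent $j$ the currently-available original item that is the $t$-th largest for agent $j$, where $p$ is the $t$-th smallest position in $A_j'$... Actually the clean statement is: there is a bijection $\pi : \{1,\dots,m\} \to M$ such that for every agent $i$ and every position $p \in A_i'$, $s_{i,\pi(p)} \le s'_{i,p}$, and then setting $A_i = \pi(A_i')$ gives the result by fact~(ii). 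Establishing such a $\pi$ is the crux.

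The hard part will be proving the existence of that global bijection $\pi$ --- the naive per-agent greedy does not obviously compose, and one needs an exchange argument (or a Hall-type / induction-on-$m$ argument) showing the partitions are mutually compatible. I would prove it by induction on $m$: consider the smallest-size items. In the IDO instance the last position $m$ is in some bundle $A_k'$; this position has size $s'_{i,m} = \min_j s_{i,j}$ for \emph{every} agent $i$ simultaneously only if... no --- rather, I would instead peel off, for the agent $k$ owning position $m$, the item in $R$ that is smallest \emph{for agent $k$}; assign it to $k$, remove position $m$ and that item, and check that the residual instance (with $m-1$ items, and $A_k'$ shrunk by one) is still IDO and still has a valid $\alpha$-MMS allocation in the reduced IDO sense, so the induction hypothesis applies. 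Monotonicity of $v_i$ and fact~(ii) guarantee that giving agent $k$ a smallest-for-$k$ item only helps her, and removing one item can only decrease everyone's $\MMS$ in a controlled way; the delicate point to verify is that $\MMS_i$ of the reduced IDO instance is still at least $\MMS_i$ of the original divided by nothing --- i.e.\ that the $\alpha$-MMS guarantee is preserved under this peeling, which is exactly where IDO-ness (agents agreeing on the order) is used and where arbitrary subadditive valuations would fail. Finally, all operations --- computing $\sigma_i$, the peeling, the re-packing bound --- are clearly polynomial, and since the algorithm never needs to compute $v_i(S)$ or $\MMS_i$ exactly (it only rearranges items), the constructed $\mathcal{A}$ is obtained in polynomial time, completing the proof.
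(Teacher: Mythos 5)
Your overall strategy is the right one, and your peeling procedure (process positions from $m$ down to $1$; when position $p$ belongs to $A_k'$, hand agent $k$ the still-available original item that is smallest for her) is exactly the paper's Algorithm~\ref{allocationForGeneralInstance}. Your facts (i) and (ii) are also the two observations the paper relies on. But the crux you correctly identify --- exhibiting a bijection $\pi$ with $s_{i,\pi(p)}\le s'_{i,p}$ for every $i$ and every $p\in A_i'$ --- is never actually established. The argument you need is a one-line counting observation, not an induction: at the iteration that processes position $g$, exactly $m-g$ items have already been removed, so $g$ items remain in the pool $R$; the smallest of any $g$ items, measured by agent $p_g$'s sizes, has size at most her $(m-g+1)$-th smallest (equivalently $g$-th largest) size over all of $M$, and that quantity is by construction exactly $s'_{p_g,g}$. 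This gives the domination for every position simultaneously, after which fact (ii) yields $v_i(A_i)\le v_i'(A_i')$ and fact (i) finishes the proof. This is precisely how the paper argues.

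The induction you propose in place of this counting step has a genuinely broken invariant. You want to apply the induction hypothesis to the residual instance with $m-1$ items, which requires the residual IDO allocation to be $\alpha$-MMS \emph{with respect to the residual instance's maximin shares}; but deleting an item can strictly decrease $\MMS_i$, so the hypothesis $v_i'(A_i')\le\alpha\cdot\MMS_i$ for the original instance does not transfer to the reduced one, and you flag this yourself as ``the delicate point to verify'' without verifying it. The repair is to drop MMS from the induction entirely: the only property worth propagating is the pointwise size domination of assigned items by IDO positions, and the comparison to $\MMS_i$ is made once, at the very end, against the original (full) instance, whose maximin shares coincide in $I$ and $I'$.
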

\begin{proof}
    We design Algorithm \ref{allocationForGeneralInstance} that given $I$, $I'$ and $\mathbf{A}'$, computes the desired allocation $\mathbf{A}$. 
    In the algorithm, we look at the items from $e_m$ to $e_1$. 
    For each item, we let the agent who receives it in $I'$ pick her smallest unallocated item in $I$. 
    
    To prove the lemma, we first show that $v_i(A_i) \le v'_i(A'_i)$ for all $i\in N$.  
    Consider the iteration where we look at the item $e_g$. 
    We suppose that in this iteration agent $i$ picks item $e_{g'}$; that is, $e_g\in A'_i$, $e_{g'} \in A_i$ and $e_{g'}$ is the smallest unallocated item for $i$. 
    Since an item is removed from the set $R$ after it is allocated, exactly $m-g$ items have been allocated before $e_{g'}$ is allocated. 
    Therefore, $e_{g'}$ is among the top $m-g+1$ smallest items for agent $i$. 
    Recall that $e_g$ is the item with the exactly $(m-g+1)$-th smallest size to $i$, hence $s_{i, g'} \le s'_{i, g}$. 
    The same reasoning can be applied to other items in $A'_i$ and $A_i$, and to other agents. 
    It follows that for any $i\in N$, any $e_g\in A'_i$ and the corresponding $e_{g'} \in A_i$, $s_{i, g'} \le s'_{i, g}$. 
    For the bin packing or job scheduling setting, this implies $v_i(A_i) \le v'_i(A'_i)$. 
    Since the maximin share depends on the sizes of the items but not on the order, the maximin share of agent $i$ in $I'$ is the same as that in $I$, i.e., $\MMS_i^d(I') = \MMS_i^d(I)$. 
    Hence, the condition that $v_i'(A_i') \le \alpha\cdot \MMS_i^d(I')$ gives $v_i(A_i) \le \alpha\cdot \MMS_i^d(I)$, which completes the proof. 
\end{proof}

\begin{algorithm}
\caption{IDO reduction for the bin packing  and job scheduling settings}\label{allocationForGeneralInstance}
\textbf{Input: }{A general instance $I$, the IDO instance $I'$ and an allocation $\mathbf{A}' = (A_1', ..., A_n')$ for the IDO instance such that $v_i'(A_i') \le \alpha \cdot \MMS_i^d(I')$} for all $i\in N$.\\
\textbf{Output: }{An allocation $\mathbf{A} = (A_1, ..., A_n$) such that $v_i(A_i) \le \alpha \cdot \MMS_i^d(I)$ for all $i\in N$.} 

\begin{algorithmic}[1]
\STATE For all $i\in N$ and $e_g\in A_i'$, set $p_g \leftarrow i$. \\
\STATE Initialize $A_i \leftarrow \emptyset$ for all $i\in N$, and $R \gets M$. 
\FOR{$g = m$ to $1$\label{IDO2General:for:begin}}
 \STATE Pick $e_{g'}\in \argmin_{e_{k}\in R}\{s_{p_g, k}\}$.\\
 \STATE $A_{p_g} \gets A_{p_g} \cup \{e_{g'}\}$, $R \gets R \setminus \{e_{g'}\}$.
\ENDFOR \label{IDO2General:for:end}
\end{algorithmic}
\end{algorithm}

\subsection{Lower bound instance}
\label{subsec:bin:lowerboundI}
We present an instance for the bin packing setting where no allocation can be better than 2-MMS. 
We first recall the impossibility instance given by \citet{DBLP:conf/wine/FeigeST21}.
In this instance there are three agents and nine items as arranged in a three by three matrix. 
The three agents' costs are shown in the matrices $V_1,V_2$ and $V_3$.
\begin{align*}
 & 
    V_1=\begin{pmatrix}
6 & 15  & 22 \\
26 & 10 & 7 \\
12 & 19 & 12
\end{pmatrix} &&
V_2=\begin{pmatrix}
6 & 15 & 23 \\
26 & 10 & 8 \\
11 & 18 & 12
\end{pmatrix}\\ &
V_3=\begin{pmatrix}
6 & 16 & 22 \\
27 & 10 & 7 \\
11 & 18 & 12
\end{pmatrix}
\end{align*}
\citet{DBLP:conf/wine/FeigeST21} proved that for this instance the MMS value of every agent is $43$, however, in any allocation, at least one of the three agents gets cost no smaller than 44.

We can adapt this instance to the bin packing setting and obtain a lower bound of 2.
In particular, we also have three agents and nine items. 
The numbers in matrices $V_1,V_2$ and $V_3$ are the sizes of the items to agents $1,2$ and $3$, respectively.
Let the capacities of the bins be $c_i=43$ for all $i\in \{1, 2, 3\}$.
Accordingly, we have $\MMS_i = 1$ for all $i\in \{1, 2, 3\}$.
Since in any allocation, there is at least one agent who gets items with total size no smaller than 44, for this agent, she has to use two bins to pack the assigned items, which means that no allocation can be better than 2-MMS. 

\subsection{Computing $\frac{3}{2}\MMS + 1$ allocations}
\label{subsec:bin:3/2}
Recall that in the proof of Corollary \ref{cor:bin:cardinal}, it has been shown that each agent $i\in N$ can use $\MMS_i$ bins to pack all items in $W_{i}(A_i \setminus \{e_i^*\})$ and another $\MMS_i$ bins to pack all items in $J_{i}(A_i \setminus \{e_i^*\}) \cup \{e_i^*\}$. 
Actually, since all items in $J_{i}(A_i \setminus \{e_i^*\}) \cup \{e_i^*\}$ are small for $i$ and at least two small items can be put into one bin, $i$ only needs $\lceil \frac{\MMS_i}{2} \rceil$ bins to pack all items in $J_{i}(A_i \setminus \{e_i^*\}) \cup \{e_i^*\}$. 
Therefore, each agent $i$ can use no more than $\frac{3}{2}\MMS_i + 1$ bins to pack all the items allocated to her.

\section{Proportionality up to one or any item}
\label{sec:PROP}
We now discuss two other relaxations for proportionality, i.e., proportional up to one item (PROP1) and proportional up to any item (PROPX), which are also widely studied for additive costs.

\begin{definition}[$\alpha$-PROP1 and $\alpha$-PROPX]
An allocation $\mathbf{A} = (A_1, \ldots, A_n)$ is $\alpha$-approximate proportional up to one item ($\alpha$-PROP1) if $v_i(A_i\backslash \{e\})\le \alpha \cdot \frac{v_i(M)}{n}$ for all agents $i\in N$ and some item $e\in A_i$. 
It is $\alpha$-approximate proportional up to any item ($\alpha$-PROPX) if $v_i(A_i\backslash \{e\})\le \alpha \cdot \frac{v_i(M)}{n}$ for all agents $i\in N$ and any item $e\in A_i$. 
The allocation is PROP1 or PROPX if $\alpha = 1$. 
\end{definition}

It is easy to see that a PROPX allocation is also PROP1.
Although exact PROPX or PROP1 allocations are guaranteed to exist for additive costs, when the costs are subadditive, no algorithm can be better than $n$-PROP1 or $n$-PROPX. 
Consider an instance with $n$ agents and $n+1$ items. The cost function is $v_i(S) = 1$ for all agents $i\in N$ and any non-empty subset $S\subseteq M$. Clearly, the cost function is subadditive since $v_i(S) + v_i(T) \ge v_i(S\cup T)$ for any $S, T\subseteq M$. By the pigeonhole principle, at least one agent $i$ receives two or more items in any allocation of $M$. After removing any item $e\in A_i$, $A_i$ is still not empty. 
That is, $v_i(A_i\backslash \{e\}) = 1 = n\cdot \frac{v_i(M)}{n}$ for any $e \in A_i$.  
This example can be easily extended to the bin packing and job scheduling settings, and thus we have the following theorem. 

\begin{theorem}
\label{thm:propx}
For the bin packing and job scheduling settings, no algorithm performs better than $n$-PROP1 or $n$-PROPX.
\end{theorem}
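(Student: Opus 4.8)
The plan is to reuse the generic subadditive instance sketched just above the statement and to realize the valuation $v_i(S)=1$ for every nonempty $S$ inside each of the two concrete models; once this is done, a pigeonhole argument identical to the one already given finishes the proof. Throughout I would exploit that bin-packing and job-scheduling valuations are automatically subadditive (packing, resp.\ scheduling, $S$ and $T$ separately is a feasible, possibly suboptimal, solution for $S\cup T$), so subadditivity never needs to be checked by hand, and it suffices to pin down the model parameters.

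For the bin-packing model I would take $n$ agents, $m=n+1$ items, set $s_{i,j}=1$ for all $i\in N$ and $j\in M$, and set every bin capacity to $c_i=n+1$. Then $c_i\ge \max_{j}s_{i,j}$ as required, every nonempty subset of $M$ fits into a single bin, and $M$ itself (total size $n+1$) fits into one bin; hence $v_i(S)=1$ for every nonempty $S\subseteq M$ and $v_i(M)=1$. For the job-scheduling model I would again take $n$ agents and $m=n+1$ unit-size items, and give every agent $k_i=n+1$ machines of unit speed. For any nonempty $S\subseteq M$ we may place each item of $S$ on its own machine, giving makespan exactly $1$; so once more $v_i(S)=1$ for every nonempty $S$ and $v_i(M)=1$.

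It then remains to run the pigeonhole step. In any allocation $\bX=(X_1,\dots,X_n)$ of the $n+1$ items to $n$ agents, some agent $i$ receives $|X_i|\ge 2$. For every item $g\in X_i$ the set $X_i\setminus\{g\}$ is nonempty, so in both models $v_i(X_i\setminus\{g\})=1=n\cdot \frac{v_i(M)}{n}$. Therefore $\bX$ is not $\alpha$-PROP1 for any $\alpha<n$, and since every $\alpha$-PROPX allocation is in particular $\alpha$-PROP1, $\bX$ is not $\alpha$-PROPX for any $\alpha<n$ either. As this holds for an arbitrary allocation, no algorithm can do better than $n$-PROP1 or $n$-PROPX on these instances. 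The only mildly non-routine point is the realization of a constant valuation, and even that is immediate once one observes that handing each agent enough bin capacity (resp.\ enough machines) collapses $v_i$ to a constant on nonempty sets, so I do not anticipate a real obstacle.
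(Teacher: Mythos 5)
Your proposal is correct and follows essentially the same route as the paper: realize a valuation that is identically $1$ on nonempty sets within each model (tiny-items-relative-to-capacity for bin packing, enough unit-speed machines for scheduling) and then apply the pigeonhole argument to $n+1$ items among $n$ agents. The only difference is cosmetic parameter choices; if anything, your job-scheduling instance ($n$ agents, $n+1$ unit jobs, $n+1$ machines each, so $v_i(M)=1$) is slightly cleaner than the paper's, which uses $2n$ agents with $2n$ machines each and $v_i(M)=2$.
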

\begin{proof}
For the bin packing setting, consider an instance with $n$ agents and $n+1$ items. The capacity of each agent's bins is 1, i.e, $c_i = 1$ for all $i\in N$. Each item is very tiny so that every agent can pack all items in just one bin, e.g., $s_{i, j} = \frac{1}{n+1}$ for any $i\in N$ and $e_j\in M$. 
Therefore, we have $v_i(M) = 1$ and $\PROP_i = \frac{1}{n}$ for each agent $i\in N$. 
By the pigeonhole principle, at least one agent $i$ receives two or more items in any allocation of $M$. After removing any item $e\in A_i$, agent $i$ still needs one bin to pack the remaining items. 
Hence, we have $v_i(A_i\backslash \{e\}) = 1 = n\cdot \PROP_i$ for any $e\in A_i$.   

For the job scheduling setting, consider an instance with $2n$ agents and $2n+1$ items where each agent possesses $2n$ machines with the same speed of 1, and the size of each item is 1 for every agent. 
It can be easily seen that for every agent $i\in N$, the maximum completion time of her machines is minimized when assigning two items to one machine and one item to each of the remaining $2n-1$ machines. 
Therefore, we have $v_i(M) = 2$ and $\PROP_i = \frac{2}{2n} = \frac{1}{n}$ for any $i\in N$. 
Similarly, by the pigeonhole principle, at least one agent $i$ receives two or more items in any allocation of $M$. 
This implies that $v_i(A_i\backslash \{e\}) = 1 = n\cdot \PROP_i$ for any $e\in A_i$, thus completing the proof. 
\end{proof}

\end{document}